\theoremstyle{definition}
\newtheorem{definition}{Definition}[section]
\newtheorem{remark}[definition]{Remark}
\newtheorem{example}[definition]{Example}
\theoremstyle{plain}
\newtheorem{lemma}[definition]{Lemma}
\newtheorem{proposition}[definition]{Proposition}
\newtheorem{theorem}[definition]{Theorem}
\newtheorem{corollary}[definition]{Corollary}
\def\rank{{\rm rank}}
\def\Iso{{\rm Iso}}
\def\crit{{\rm crit}}
\def\by{{y}}
\def\xb{\mathbf{x}}
\def\ab{{\bf a}}
\def\bx{{ x}}
\def\lV{{\tilde{V}}}
\def\bp{{ p}}
\def\bq{{q}}
\def\Cc{{\mathcal C}}
\def\LL{\mathcal{L}}
\def\bz{{z}}
\def\e{{\bf e}}
\def\KK{{\mathbb K}}
\def\NN{{\tt N}}
\def\D{{\mathcal D}}
\def\V{{ V}}
\def\L{{\mathcal L}}
\def\C{{\mathbb C}}
\def\PP{{\mathbb P}}
\def\N{{\mathbb N}}
\def\Q{{\mathbb Q}}
\def\R{{\mathbb R}}
\def\Z{{\mathbb Z}}
\def\Sing{{\rm Sing}}
\journal{Journal of Symbolic Computation} 
\begin{document}

\begin{frontmatter}
\title{Smooth Points on Semi-algebraic Sets}

\author[address1]{Katherine Harris}
\ead{harriske@beloit.edu}
\author[address2]{Jonathan D. Hauenstein}
\ead{hauenstein@nd.edu}
\author[address3]{Agnes Szanto}
\ead{aszanto@ncsu.edu}

\address[address1]{Department of Mathematics and Computer Science, Beloit College, \\
700 College Avenue, Beloit, Wisconsin, 53511, USA.}

\address[address2]{Department of Applied and Computational Mathematics and Statistics, University of Notre Dame, \\
102G Crowley Hall, Notre Dame, Indiana, 46556, USA.} 

\address[address3]{Department of Mathematics, North Carolina State University, \\
Campus Box 8205, Raleigh, North Carolina, 27695, USA.}

\begin{abstract}
Many algorithms for determining properties of real algebraic or semi-algebraic sets rely upon the ability to compute smooth points.  In this paper, we present a procedure based on computing the critical points of some well-chosen function that guarantees the computation of smooth points in each bounded connected component of a (real) atomic semi-algebraic set. Our technique is intuitive in principal, performs well on previously difficult examples, and is straightforward to implement using existing numerical algebraic geometry software.  The practical efficiency of our approach is demonstrated by solving a conjecture on the number of equilibria of the Kuramoto model for the $n=4$ case. We also apply our method to design an algorithm to compute the real dimension of algebraic sets, the original motivation for this research. We compare the efficiency of our method to existing methods to compute the real dimension on a benchmark family.
\end{abstract}

\begin{keyword}
computational real algebraic geometry, real smooth points, real dimension, polar varieties, numerical algebraic geometry, Kuramoto model
\end{keyword}

\end{frontmatter}

\section{Introduction}


When studying an atomic semi-algebraic set $S\subset\R^n$, i.e.,
{ \begin{eqnarray}\label{eq:S}
S=\left\{x\in \R^n : f_1(x)=\cdots=f_{s}(x)=0,\; q_1(x)>0, \ldots, q_{m}(x)>0\right\}\end{eqnarray}}for some polynomials
$f_1, \ldots, f_s, q_1, \ldots, q_m\in \R[x_1, \ldots, x_n]$, one often first considers the complex variety $V=\{x\in \C^n\;:\; f_1(x)=\cdots=f_{s}(x)=0\}$, known as its {\em algebraic relaxation}, and deduces~properties of $S$ from the properties of $V$. In particular, if $S$ contains a smooth point and~$V$ is  irreducible, then $S$ is Zariski dense in $V$, and so all of the algebraic information of $S$ can be determined from~$V$. Thus, deciding the existence of smooth points in semi-algebraic sets and finding such points is a central problem in real algebraic geometry with many applications. 
An example of such application is computing all typical ranks of tensors \cite{RealTypicalRanks2018,Kruskal1989,TypicalRank2012}.

One of the main results of this paper is to give a new technique to compute smooth points on bounded connected components of atomic semi-algebraic sets. When $V$ is equidimensional, our method is simple and suggests a natural implementation using numerical homotopy methods. It extends other approaches that compute  sample points on   real semi-algebraic sets, such as computing  the critical points of the distance function,  in the sense that our method also guarantees the smoothness of the sample points. We illustrate this advantage on ``Thom's lips,'' 
in which critical points of the distance function are often at the singularities~\cite[Ex.~2.3]{WuReid13}, while our method always computes smooth points. 

The main idea is straightforward when $V$ is irreducible. If a polynomial $g$  vanishes on the singular points of~$V$, but does not vanish on all of~$V$, then the set of 
extreme points of $g$ on $V\cap \R^n$ must contain 
a smooth point on every bounded connected component of $V\cap \R^n$, if such points exist. We extend this idea beyond the case when $V$ is irreducible and $S=V\cap \R^n$ to the general case. We handle the case when   $V$ is  reducible and its irreducible components have different dimensions  using {\em infinitesimal deformations} of $V$ and limits. We show that this limiting approach is well-suited for numerical homotopy continuation methods after we  translate an infinitesimal real deformation (that may only work for arbitrary small values) into a complex deformation that works 
along a real arc parameterized by 
the interval~$(0,1]$. Finally, we present a novel technique to compute the required~$g$ polynomial using {\em deflations}, and compare its degree bounds to traditional symbolic approaches (see Proposition \ref{prop:boundg}). In fact, Corollary \ref{cor:boundLe} proves that our {\sc Real Smooth Point Algorithm} performs well if the depth of the deflations (i.e., the number of iterations)  is small. 

To demonstrate the practical efficiency of our new approach, we present the solution of a conjecture for the first time: 
counting the equilibria of the Kuramoto model in the $n=4$ case given in \cite{KuramotoConjecture} (see \cite{Kuramoto1975} for the original model and  \cite{Owenetal2018} for a detailed historical overview and additional references).

We also apply our method to compute the real dimension of a semi-algebraic set. The difficulty of this problem, compared to its complex counterpart, is that in many cases the real part of a semi-algebraic set lies within the singular set of the complex variety containing it so that its real dimension is smaller than the complex one. In terms of worst case complexity bounds of the existing algorithms in the literature, it is an open problem if the real dimension can be computed within the same asymptotic complexity bounds as the complex dimension. The original
motivation for this research was to try to find an algorithm for the real dimension that has worst case complexity comparable to its complex counterpart. 
Even though this paper is presented using computational tools
from {\em numerical algebraic geometry} (c.f.,~ \cite{SommeseWampler2005,BertiniBook2013}), 
all procedures can be translated to symbolic methods for polynomials with rational coefficients. In fact, 
after performing a worst case complexity estimate for a symbolic version, we unfortunately found that it does not improve the existing complexity bounds in the worst case (see \cite{LairezSafey2021} and the references therein). This is one of the reasons we present our results in a numerical algebraic geometry setting and give evidence of the efficiency with implementation on a benchmark family.  As mentioned above, in Proposition~\ref{prop:boundg} and Corollary~\ref{cor:boundLe}, we give bounds on the degrees of the polynomials appearing in our algorithms and the number of homotopy paths they follow, 
highlighting the advantages and disadvantages of our approach compared to other purely symbolic~techniques.

\subsection{Related Work}

There are many approaches in the literature to compute at least one real point on every connected component of a semi-algebraic set. Methods using projections to obtain a cell decomposition based on sign conditions go back to Collins' Cylindrical Algebraic Decomposition (CAD) algorithm described in \cite{Collins}. Improved symbolic methods using critical points or generalized critical points of functions along with infinitesimals and randomization can be found in \cite{RouRoySaf2000,AuRouSaf02,Safey2007,Faugereetal2008}. The current state of the art deterministic symbolic algorithm is given in \cite[Alg. 13.3]{BasuPollackRoyBook} which computes sample points on each connected component of all realizable sign conditions of a polynomial system and gives a complexity analysis. The most recent application of this technique is in \cite{SafYanZhi2018,SafYanZhi2019} where the authors compute smooth points on real algebraic sets in order to compute the real radical of polynomial systems and analyze complexity.

Another line of work has been developed in parallel focuses on computing critical points while utilizing the tool of \textit{polar varieties}, introduced and developed in \cite{Banketal1997,SafeySchost03,Banketal04,Banketal09,Banketal2010,Banketal14,SafeySpaen2016}. Alternatively, a homotopy-based approach computing the critical points of the distance function from a generic point or a line is presented in~\cite{Hauenstein2013,WuReid13}.
 It is important to note, however, that all of these methods only guarantee the finding at least one real point on every connected component of a semi-algebraic set, rather than real \textit{smooth} points.

The real dimension problem has similarly been widely studied and
the current state of the art deterministic algorithm is given by  \cite[Alg. 14.10]{BasuPollackRoyBook} computing all realizable sign conditions of a polynomial system. This approach improves on previous work in \cite{Vorobjov1999} to obtain a complexity result with a better dependence on the number of polynomials in the input by utilizing a block elimination technique first proposed in \cite{GriVor88}.
More recent work has been presented giving probabilistic algorithms utilizing polar varieties which improve on complexity bounds even further in \cite{MohabElias2013,BanSaf2015}. Very recently,  \cite{LairezSafey2021} gave a symbolic algorithm with improved complexity to compute the dimension of real algebraic sets  using ``level sets" and use them to reduce the problem
to one of lower dimension.  In the present paper, we use a benchmark family that appeared in  \cite{LairezSafey2021} to demonstrate the efficiency of our method. 


One can also compute the real dimension by computing the real radical of a semi-algebraic set, first studied in~\cite{BeckNeu93}
with improvements and implementations in \cite{Neuhaus98,Zeng1999,Spang08,Chenetal2013}. The most recent implementation can be found in \cite{SafYanZhi2018,SafYanZhi2019} as mentioned above. Their approach is shown to be efficient in the case when the algebraic set is smooth, but the iterative computation of singularities can increase the complexity significantly in the worst case. An alternative  method using semidefinite programming techniques was proposed by \cite{Wang2016,Maetal2016}. 

 \section{Preliminaries}

\subsection{Basic Definitions}

The following collects some basic
notions used throughout the paper,
including atomic semi-algebraic sets,
semi-algebraic sets, and real algebraic
sets.

A set $S\subset \R^n$ is an {\em atomic semi-algebraic set} if it is of the form of~\eqref{eq:S}.
A set $T\subset \R^n$ is a {\em semi-algebraic set}  if it is a finite  union of {atomic semi-algebraic sets}. 
A set $U\subset \R^n$  is a {\em real algebraic set} if it is defined by polynomial equations only.

Smoothness on atomic semi-algebraic sets
is described next.

\begin{definition}
Let $S\subset \R^n$ be an atomic semi-algebraic set as in~\eqref{eq:S}. A point $\bz\in S$ is {\em smooth} (or nonsingular) in $S$ if $z$
is smooth in the algebraic set
$$V(f_1, \ldots, f_s) =\{x\in\C^n
\; :\;f_1(x) = \cdots = f_s(x) = 0\},$$ 
i.e., if
there exists a unique irreducible component $V\subset V(f_1, \ldots, f_s)$ containing~$\bz$ such that
$$
\dim T_\bz(V)= \dim V 
$$
where $T_\bz(V)$ is the tangent space of $V$ at $\bz$. We denote by $\Sing(S)$ the set of singular (or non-smooth)  points in $S$.
\end{definition}

An algebraic set $V\subset \C^n$ is {\em equidimensional} of dimension $d$ if every irreducible component of~$V$ has dimension $d$.  The following defines the (local) real dimension of semi-algebraic sets from \cite[\S 5.3]{BasuPollackRoyBook}.

\begin{definition}\label{def:realdimension}
For a semi-algebraic set $S\subset \R^n$,
its {\em real dimension} $\dim_\R S$ is the largest $k$ such that there exists an injective semi-algebraic map from $(0,1)^k$ to $S$.  Here, a map 
$\varphi:(0,1)^k\rightarrow S$ is semi-algebraic if the graph of $\varphi$ in $\R^{n+k}$ is semi-algebraic. By convention,  the dimension (real or complex) of the empty set is $-1$.
 \end{definition}
 
 \begin{definition}\label{def:reallocaldimension}
 Consider a point $\bz \in S \subset \R^n$, where $S$ is a semi-algebraic set. The {\em local real dimension} of $S$ at $\bz$ is the maximal real dimension of the closure of every connected component~$C_j$ of $S$ such that $\bz \in \overline{C_j}$. 
 \end{definition}
 
 

The main ingredient in our results is the following theorem that was proven in \cite[Theorem 12.6.1]{Marshall2008}.

\begin{theorem}\label{thm:realconvdim}
Let $V\subset \C^n$ be an irreducible algebraic set. Then
$$
\dim_\R \left(V\cap\R^n\right)=\dim_\C V
$$
if and only if there exists $\bz\in V\cap \R^n$ that is smooth. 
\end{theorem}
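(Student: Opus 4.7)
The plan is to prove the two implications separately, using as the main ingredient the standard comparison inequality $\dim_\R(W\cap\R^n)\le\dim_\C W$ valid for any complex algebraic set $W\subset\C^n$ defined over $\R$. I assume $V$ is defined over $\R$ (the setting in which the theorem is used in this paper), so that I may fix real generators $f_1,\ldots,f_s$ of $I(V)$.

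For the direction $(\Leftarrow)$, given a smooth point $\bz\in V_\R$, smoothness says that the real Jacobian $J(\bz)$ of $f_1,\ldots,f_s$ has rank $n-d$, where $d=\dim_\C V$. The real implicit function theorem then realizes a neighborhood of $\bz$ in $V_\R$ as the graph of a real-analytic map in $d$ of the coordinates, so projection onto those coordinates is a semi-algebraic bijection from some semi-algebraic neighborhood of $\bz$ in $V_\R$ onto an open ball in $\R^d$. Precomposing its inverse with an affine embedding of $(0,1)^d$ onto a small cube in this ball yields an injective semi-algebraic map $(0,1)^d\to V_\R$, so $\dim_\R V_\R\ge d$; the opposite inequality $\dim_\R V_\R\le d$ is the comparison inequality applied with $W=V$.

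For $(\Rightarrow)$, I argue by contradiction. Suppose $\dim_\R V_\R=d$ but no point of $V_\R$ is smooth on $V$, i.e.\ $V_\R\subseteq \Sing(V)$. Since $V$ is irreducible, $\Sing(V)$ is a proper algebraic subset, hence $\dim_\C\Sing(V)<d$. The comparison inequality applied with $W=\Sing(V)$ gives
$$\dim_\R V_\R\;\le\;\dim_\R\bigl(\Sing(V)\cap\R^n\bigr)\;\le\;\dim_\C\Sing(V)\;<\;d,$$
contradicting $\dim_\R V_\R=d$. Therefore a smooth point of $V$ must exist in $V_\R$.

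The main obstacle is really only the comparison inequality $\dim_\R(W\cap\R^n)\le\dim_\C W$; with it in hand, the $(\Leftarrow)$ direction is a direct application of the implicit function theorem and the $(\Rightarrow)$ direction is just a dimension count on the singular locus of an irreducible variety. The inequality itself is a standard fact from real algebraic geometry; it can be derived from semi-algebraic cell decomposition, identifying the real dimension of a real algebraic set with the complex dimension of its complexification, or alternatively from a Tarski--Seidenberg-based projection argument on the graph of the real equations cutting out $W\cap\R^n$.
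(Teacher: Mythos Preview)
Your proof is correct. Note, however, that the paper does not give its own proof of this theorem: it simply states the result and cites \cite[Theorem~12.6.1]{Marshall2008}. So there is nothing in the paper to compare your argument against.

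That said, your argument is exactly the standard one and would be appropriate as a self-contained replacement for the citation. Both directions are handled cleanly: the $(\Leftarrow)$ direction via the real implicit function theorem at a smooth real point (the Jacobian of real generators of $I(V)$ has real entries at a real point, and its rank over $\R$ equals its rank over $\C$), and the $(\Rightarrow)$ direction via the dimension drop on $\Sing(V)$ for irreducible $V$ together with the comparison inequality $\dim_\R(W\cap\R^n)\le\dim_\C W$. Your explicit assumption that $V$ is defined over $\R$ is the right move; without it the statement requires more care (an irreducible $V\subset\C^n$ not invariant under conjugation can still meet $\R^n$, but then $V_\R\subset V\cap\overline{V}$ already forces $\dim_\R V_\R<\dim_\C V$), and in any case every use of the theorem in the paper is for varieties cut out by real polynomials.
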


\subsection{Semi-algebraic to Algebraic}

In this subsection, we show that our problem on atomic semi-algebraic sets can be reformulated as a problem on real algebraic sets. This will allow us to use homotopy continuation methods for solving polynomial equations.

The following shows that smooth points on each connected component of an atomic semi-algebraic set $S$ can be obtained as  projections of smooth points of some  real algebraic set.  

\begin{proposition}\label{prop:semalgalg}
Let $S$ be an atomic semi-algebraic set as in \eqref{eq:S} and
\begin{eqnarray*}W:= \left\{(x, z)\in \R^n\times\R^{m} \; :\; f_1(x)=\cdots=f_{s}(x)=0,\,z_1^2q_1(x)-1= \cdots= z_{m}^2q_{m}(x)-1=0\right\}.
\end{eqnarray*}
If $\by\in W$ is smooth, then $\pi_x(\by)\in S$ is also smooth. Conversely, if $\bx\in S$ is smooth, then 
$(x,z)$ is smooth in $W$ for all
$z=(z_1, \ldots, z_m)\in \R^m$ such that $(x,z)\in W$.
\end{proposition}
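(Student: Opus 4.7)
The plan is to reduce the question to a Jacobian rank computation plus a comparison of dimensions of irreducible components. Let $V=V(f_1,\dots,f_s)\subset\C^n$ and $\widetilde W\subset\C^n\times\C^m$ be the complex algebraic set defined by $f_1,\dots,f_s$ together with $z_j^2q_j(x)-1$ for $j=1,\dots,m$. Smoothness of a real point is decided inside this complex set, so I work with $\widetilde W$ and~$V$.

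First I would write down the Jacobian $J$ of the defining polynomials of $\widetilde W$ at a point $(x,z)\in\widetilde W$. Because the equations $z_j^2q_j(x)-1=0$ involve only a single $z$-variable each, $J$ has the block form
\[
J=\begin{pmatrix} J_f(x) & 0 \\ \ast & D(x,z) \end{pmatrix},
\]
where $J_f(x)$ is the $s\times n$ Jacobian of $f_1,\dots,f_s$ and $D(x,z)$ is the $m\times m$ diagonal matrix with entries $2z_jq_j(x)$. On $\widetilde W$ the identities $z_j^2q_j(x)=1$ force $z_j\neq 0$ and $q_j(x)\neq 0$, so $D(x,z)$ is invertible and $\rank J=\rank J_f(x)+m$. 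Consequently $\dim T_{(x,z)}\widetilde W=(n+m)-\rank J=n-\rank J_f(x)=\dim T_x V$.

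Next I would compare the dimensions of the irreducible components. Let $V_0$ be the irreducible component of $V$ containing $x$, and $W_0$ the irreducible component of $\widetilde W$ containing $(x,z)$. As noted in the paragraph preceding the proposition, $\pi_x$ restricted to $\widetilde W$ has finite fibres over every point of its image, so $\pi_x(W_0)$ is an irreducible constructible subset of $V$ of the same dimension as $W_0$; since it contains $x$, we get $\pi_x(W_0)\subset V_0$ and $\dim W_0=\dim V_0$. (Equivalently: away from the locus where some $q_j$ vanishes, $\pi_x$ is the finite étale cover $z_j=\pm q_j(x)^{-1/2}$ of degree $2^m$, and every point of $\widetilde W$ lies in this locus.)

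Combining the two steps, $(x,z)$ is smooth in $\widetilde W$ iff $\dim T_{(x,z)}\widetilde W=\dim W_0$ iff $\dim T_xV=\dim V_0$ iff $x$ is smooth in $V$, which by the definition given earlier is precisely smoothness of $x$ in $S$. This simultaneously proves both directions of the proposition, and it holds for every $z$ with $(x,z)\in W$ since $z$ enters only through the invertible block $D$. The main subtlety is the second paragraph's equality $\dim W_0=\dim V_0$: one has to be sure that $V_0$ is not entirely contained in some $\{q_j=0\}$, but this is automatic because $(x,z)\in W$ forces $q_j(x)>0$, so $V_0$ meets the open set $\{q_j\neq 0\text{ for all }j\}$ and $\pi_x\colon W_0\to V_0$ is dominant.
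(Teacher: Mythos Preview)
Your proof is correct and shares the paper's core idea: the block lower-triangular Jacobian with invertible diagonal block $D(x,z)$ forces $\rank J=\rank J_f(x)+m$, hence equal tangent-space codimensions. The difference lies in how the two arguments connect rank to smoothness. The paper first reduces to the case where $V$ is irreducible and $f_1,\dots,f_s$ generate its prime ideal, so that (implicitly) smoothness becomes the single condition ``Jacobian has maximal rank,'' and then the block computation finishes immediately. You instead keep the general setup and supply the missing piece explicitly: using the finite-fibre property of $\pi_x$ to show that the irreducible components $W_0$ and $V_0$ through $(x,z)$ and $x$ have equal dimension, so that the tangent-space equality translates directly into the smoothness equivalence. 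Your route is slightly longer but more self-contained---it does not rely on being able to replace the $f_i$ by prime-ideal generators without disturbing the rank interpretation, a step the paper leaves tacit.
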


\begin{proof}
Without loss of generality, we can assume that
 $f_1, \ldots, f_s$ generate a prime ideal. The Jacobian matrix of the polynomial system defining $W$ has the block structure
$$
J(\bx,\bz)=\begin{array}{|c|c|}
\hline
Jf(\bx)& 0\\
\hline
* & {\rm diag}(2z_iq_i(\bx))\\
\hline
\end{array}
$$
Since  for $(\bx,\bz)\in W$ we have $z_ig_i(\bx)\neq 0$,  the Jacobian matrix $Jf(\bx)$ has full column rank if and only if $J(\bx,\bz)$ has full column rank, which proves the claim.
\end{proof}

Therefore, for the rest of the paper, we assume that we are given a {\em real algebraic set} and the goal is to compute smooth points on each connected component.

\begin{remark}
Applying Proposition \ref{prop:semalgalg} to reduce to the case of algebraic sets may not be the only possibility. We give this procedure for simplicity of presentation and implementation, but it is possible that another symbolically equivalent reduction method would produce a more efficient numerical algorithm. While a rigorous analysis of this falls outside the scope of this paper, we suggest the reader explores this more if they are studying an example or application that requires more computational efficiency. 
\end{remark}

 \subsection{Boundedness} 

The next reduction is to replace
an arbitrary real algebraic set 
with a compact one. 

\begin{proposition}\label{prop:bounded}
Let $f_1, \ldots, f_{s-1}\in \R[x_1, \ldots, x_{n-1}]$ and consider \mbox{$\bp=(p_1, \ldots, p_{n-1}) \in\R^{n-1}$}. 
Let \mbox{$\delta\in\R_+$},
introduce a new variable~$x_n$,
and consider
$$
f_{s}:=(x_1-p_1)^2+\cdots + (x_{n-1}-p_{n-1})^2+x_n^2-\delta.$$
Then, $V(f_1, \ldots, f_{s})\cap \R^n$ is bounded and 
{ $$
\pi_{n-1}\left(V(f_1, \ldots, f_{s})\cap \R^n\right)= V(f_1, \ldots, f_{s-1})\cap \left\{\bz\in \R^{n-1}\;:\; \|\bz-\bp\|^2\leq \delta\right\}
$$}where $\pi_{n-1}(x_1, \ldots, x_n)= (x_1, \ldots, x_{n-1})$.
\end{proposition}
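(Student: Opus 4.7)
The plan is to verify the two assertions separately, both by direct computation using the extra equation $f_{s+1}=0$, which constrains the sum of squares of shifted coordinates to equal $\delta$.

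For boundedness, I would observe that any real point $(x_1,\ldots,x_n)\in V(f_1,\ldots,f_{s+1})\cap\R^n$ satisfies
\[
(x_1-q_1)^2+\cdots+(x_{n-1}-q_{n-1})^2+x_n^2 \;=\; \delta.
\]
Since each summand is a non-negative real number, each is individually bounded above by $\delta$. This forces $|x_i-q_i|\leq\sqrt{\delta}$ for $i=1,\ldots,n-1$ and $|x_n|\leq\sqrt{\delta}$, so the entire set is contained in a closed ball of radius $\sqrt{\delta}$ centered at $(\bq,0)\in\R^n$, hence is bounded.

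For the projection identity, I would prove the two inclusions separately. For ``$\subseteq$'', take $(x_1,\ldots,x_n)$ in $V(f_1,\ldots,f_{s+1})\cap\R^n$. Vanishing of $f_1,\ldots,f_s$ (which only involve $x_1,\ldots,x_{n-1}$) gives that $\pi_{n-1}(x_1,\ldots,x_n)=(x_1,\ldots,x_{n-1})\in V(f_1,\ldots,f_s)\cap\R^{n-1}$. The identity $f_{s+1}=0$ rearranges to $\|\pi_{n-1}(x)-\bq\|^2 = \delta - x_n^2 \leq \delta$, placing the projection inside the stated closed ball. For ``$\supseteq$'', take $\bz\in V(f_1,\ldots,f_s)\cap\R^{n-1}$ with $\|\bz-\bq\|^2\leq\delta$. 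Then $\delta-\|\bz-\bq\|^2\geq 0$, so we may define $x_n:=\sqrt{\delta-\|\bz-\bq\|^2}\in\R$. By construction $(\bz,x_n)\in\R^n$ vanishes on all of $f_1,\ldots,f_{s+1}$, and it projects to $\bz$.

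There is no real obstacle here; the statement is essentially bookkeeping about how adding a sphere-type equation in one auxiliary squared variable compactifies a real algebraic set while realizing the closed ball intersection as an exact projection. The only minor point to be careful about is that we need $\delta-\|\bz-\bq\|^2\geq 0$ to extract a real square root for $x_n$, which is exactly the hypothesis $\|\bz-\bq\|^2\leq\delta$ defining the right-hand set.
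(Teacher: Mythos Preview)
Your proof is correct and follows essentially the same approach as the paper: both argue boundedness via containment in the sphere of radius $\sqrt{\delta}$ about $(\bq,0)$, and both prove the projection identity by the two inclusions, using $x_n^2=\delta-\|\bz-\bq\|^2\geq 0$ in each direction (defining $x_n$ as a real square root for ``$\supseteq$'' and reading off the inequality for ``$\subseteq$'').
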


\begin{remark}
The definition of $f_{s}$ above is based on a standard trick used in real algebraic geometry to make an arbitrary real algebraic set bounded (e.g., see \cite{BasuPolackRoy2006}). In general, $V\cap \R^{n-1}$ is embedded into a sphere in $\R^n$ around the origin of radius $1/\zeta$ where $\zeta$ is infinitesimal. Since numerical homotopy continuation methods are incompatible with infinitesimal variables,  in this paper we are only interested in computing points with bounded coordinates, so it is sufficient to embed its intersection  with a closed ball around $\bp$ of 
radius~$\sqrt{\delta}$ for some fixed $\delta\in \R_+$. In particular, we will not use infinitesimal variables in our algorithms.
\end{remark}
 
Later in the paper, when we assume that 
$V(f_1, \ldots, f_{s})\cap \R^n$ is bounded, we assume that we have applied Proposition \ref{prop:bounded} if necessary.

\subsection{Genericity Assumptions} The algorithms described in this paper make assumptions that certain points, matrices,  or linear polynomials are generically chosen from a vector space (over $\Q$, $\R$ or $\C$). In all of these cases, there exists a proper Zariski closed subset of the corresponding vector space  such that all choices outside this set yield correct results.
Therefore, a generic choice means 
it is outside of this proper Zariski closed subset. 
For algorithms which depend on generic choices, we follow the convention from the literature that they
compute the correct solutions with {\em algebraic probability one}~\cite[Chap.~4]{SommeseWampler2005}.  Effective
probability bounds can be obtained from bounds on
the degrees of the proper Zariski closed sets containing
the ``bad'' choices.   See \cite[Prop.~4.5]{KrPaSo2001} and \cite{EllSch2019} for such bounds for linear changes of variables for Noetherian position and transversality, respectively.  

\subsection{Witness Sets}

In this subsection, we  discuss  some main ideas from numerical algebraic geometry following \cite{BertiniBook2013}. In particular, we consider positive-dimensional algebraic sets and utilize a data structure for them that allows computation using classical homotopy continuation methods for square non-singular systems.  The key is the notion of {\em witness sets}, which will rely on the idea of slicing an algebraic set with a generic linear space.

\begin{definition}\label{def:witnessset} 
If an algebraic set $V\subset\C^n$ is
equidimensional with $\dim(V) = k$, a {\em witness set} for~$V$ is the triple $(F,L,W)$ such that
\begin{itemize}
\item $F\subset \C[\bx]$ is a {\em witness system} for $V$, i.e., each irreducible component of $\V$ is an irreducible component of $\V(F)$,
\item $L\subset \C[\bx]$ is a {\em linear system} where $\V(L)$ is a linear space of codimension $k$ that intersects~$V$ transversely, and
\item $W\subset \C^n $ is a {\em witness point set} which is equal to $V\cap \V(L)$.
\end{itemize}
\end{definition}

We note that the number of points in the witness point set $W$ in the above definition will determine the number of paths we need to track with homotopy continuation methods, directly impacting  the complexity of the numerical algebraic geometry computations. Note also that this number is an invariant of $V$ called its {\em degree}. 

Although the witness point set provides some information,
a witness system is needed to perform any additional computations, such as deciding whether a given point lies on the algebraic set $V$.  {\sc Membership Test Algorithm \ref{Alg:MT}} follows the approach of \cite[Section 8.4]{BertiniBook2013} in order to do this and is correct with algebraic probability one.

\begin{algorithm}[H]
\caption[Alg:MT]{\sc MembershipTest}
\label{Alg:MT}
\begin{description}
\item[Input:] $\bp \in \C^n$ and $(F,L,W)$ a witness set for some equidimensional algebraic set $V \subset \C^n$. 
\item[Output:] TRUE if $\bp \in V$ and FALSE if $\bp\not\in V$.
\end{description}
\begin{enumerate} 
\item Choose generic linear polynomials $L'$ with $\bp \in \V(L')$ and $\dim(\V(L')) = \dim(\V(L))$.
\item 
$H(\bx,t) := 
\left[  
F(\bx), tL(\bx)+(1-t)L'(\bx)
\right].
$
\item  Track the finitely many homotopy paths of $H(\bx,t)$ starting from the witness point set $W$ for $t=1$, 
obtaining the witness point set $W':=V\cap \V(L')$ at $t=0$.

\item If $\bp \in W'$, return TRUE. Else, return FALSE.   
\end{enumerate}
\end{algorithm}


\subsection{Isosingular Deflation}\label{sec:Iso}

As we mentioned in the Introduction, one of the ingredients of our algorithm for computing smooth points on a real algebraic set $V\cap \R^n$ is a polynomial $g$ that vanishes on the singular points of $V$, but does not vanish identically on the irreducible components of $V$. We give an algorithm to compute such a $g$ in Section \ref{sec:compg} using {\em isosingular deflation}. This subsection summarizes the basic definitions and results that we use  in Sections \ref{sec:defwit} and  \ref{sec:compg}. Further details on isosingular deflation can be found in \cite{HauensteinWampler2013}. 

\begin{definition}\label{Def:IDO}
Let $f_1, \hdots, f_s \in \C[\bx], F_0= \{f_1, \hdots, f_s\}$, and $\bz\in \V(F_0)\subset \C^n$.
The {\em isosingular deflation operator}~$\D$ is defined via
$$(F_1,\bz) := \D(F_0,\bz)$$
where $F_1\subset \C[\bx]$ consists of $F_0$ and all $(r+1)\times (r+1)$
minors of the Jacobian matrix $JF_0$ for~$F_0$ 
where $r = \rank~JF_0(\bz)$.  Thus, $\bz\in \V(F_1)$,
meaning that we can iterate this operator to construct a 
sequence of systems $F_{j}\subset \C[\bx]$ 
with $(F_{j},\bz) = \D(F_{j-1},\bz)=\D^j(F_0, \bz)$~for~\mbox{$j\geq 1$}.

We say that $F\subset \C[\bx]$ is the {\em isosingular deflation} of $F_0$ at $\bz$ if there exists a minimal $j\geq 0$ such that $(F, \bz)=\D^j(F_0, \bz)$ and $\dim {\rm NullSpace} (JF(\bz))=\dim_{F}(\bz)$, where 
 $\dim_{F}(\bz)$  is the maximal dimension of the irreducible components of $\V(F)$ containing $\bz$ (called the {\em local dimension} of $\bz$ with respect to $F$). 
 \end{definition}
 
 To compute the isosingular deflation of $F_0$ at $\bz$ we refer to   \cite[Algorithm 6.3]{HauensteinWampler2013}.

Using the deflation operator, we can now formally define the isosingular sets and singular points of our algebraic set in this context. 

\begin{definition}\label{def:iso}
Let $f_1, \hdots, f_s \in \C[\bx], F_0= \{f_1, \hdots, f_s\}$, and $\bz\in \V(F_0)\subset \C^n$. Let $\D$ be the isosingular deflation operator defined in Definition \ref{Def:IDO}. We define 
\begin{itemize}
\item 
The {\em deflation sequence } of $F_0$ at $\bz$ is $\{d_k(F_0, \bz)\}_{k=0}^\infty$ where 
$$d_k(F_0, \bz)={\rm dnull}(F_k, \bz):=\dim{\rm NullSpace} JF_k(\bz)$$ with $JF_k$ the Jacobian matrix of $F_k$ with $(F_k,\bz)=\D^k(F_0,\bz)$. 

\item 
Let $V\subset \V(F_0)$ be a non-empty  irreducible algebraic set. Then
$V$ is an   {\em isosingular set} of $F_0$ if there exists a sequence  $\{c_k\}_{k=1}^\infty$ such that $V$ is an irreducible component of 
$$\overline{\{\bz\in \V(F_0) \;:\; d_k(F_0, \bz)=c_k, k\in \N\}}.$$

\item Let $V\subset \V(F_0)$ be a non-empty  irreducible algebraic set. Then
$\Iso_{F_0}(V)$ is the unique isosingular set with respect to  $F_0$ containing $V$ such that  $\Iso_{F_0}(V)$ and $V$ have the same deflation sequence with respect to $F_0$.

\item 
Let $V$ be an isosingular set for $F_0$. The set of {\em singular points of $V$} with respect to $F_0$ is
$$\Sing_{F_0}(V)=\left\{\bz\in V\;:\;\{d_k(F_0, \bz)\}_{k=0}^\infty\neq \{d_k(F_0, V)\}_{k=0}^\infty\right\}.$$
Here, $d_k(F_0, V)$ is meant for a generic point in $V$.

\item The {\em local dimension} of $\bz$ with respect to $F_0$, denoted by $\dim_{F_0}(\bz)$,  is the maximal dimension of the irreducible components of $\V(F_0)$ containing $\bz$. 
\end{itemize}
\end{definition}

We next detail some particular results on isosingular sets which will be important to our methods going forward. 
 The following theorem states that the singular points of an algebraic set are preserved under isosingular deflation. We use this result in the proof of Theorem \ref{thm:gcorrect}.

\begin{theorem}\label{thm:singdeflation} \cite[Theorem 5.9]{HauensteinWampler2013}
Let $V$ be an isosingular set for $F_0$ as in Definition \ref{def:iso}. Then if $\bz \in V$ and $\bz \in \Sing(\V(F_0))$ then  $\bz \in \Sing_{F_0}(V)$. 
\end{theorem}

Finally, we have the following theorem which gives an isosingular deflation approach for constructing witness sets of the intersection of a known witness set with another algebraic set. We use this result in the proof of Theorem \ref{thm:deflcorrect}.

\begin{theorem}\label{thm:isointersect} \cite[Theorem 6.2]{HauensteinWampler2017}
Given $g_1, \hdots, g_r \in \C[\bx]$, let $Z$ be a union of irreducible components of $\V(g_1, \hdots, g_r)$. Suppose $f_1, \hdots, f_s \in \C[\by]$, $F(\bx,\by) = \{g_1(\bx), \hdots, g_r(\bx), f_1(\by), \hdots, f_s(\by)\}, \Delta=\{(\bx,\bx) : \bx \in \C^n\},$ and $\pi(\bx,\by) = \bx$. If $A \subset Z \cap \V(f_1, \hdots, f_s)$ is an irreducible component, then there exists a nonempty Zariski open set $U \subset A$ such that for all $\bp \in U$, $A$ is an irreducible component of $\;\pi\bigg(\Iso_F((\bp,\bp)) \cap \Delta\bigg)$.  
\end{theorem}
We provide the following illustrative example for the theorem. 

\begin{example}
Let $g(x,y,z):=(x+y+z)y$ defining a witness system for  $Z:= \V(x+y+z)$. Let $f(x,y,z):=y$ and note that $ Z \cap \V(f) = \V(x+z,y)$ is irreducible. We construct 
$$F_0(x,y,z,x',y',z') = [g(x,y,z)=(x+y+z)y,f(x',y',z')=y'].$$
Choose a generic witness point $\bp = (a, 0, -a) \in  A=Z\cap V(f)$ for some fixed $a \in \C.$ Then we compute the deflation sequence of $F_0$ at $(a,0,-a, a, 0,-a)$   as $5,3,3,\hdots$ such that 
$$\Iso_{F_0}((\bp,\bp)) = \{(b,0,-b,c,0,d) : b,c,d \in \C\}.$$
The polynomial system defining this 3-dimensional isosingular set is given by adding the $2 \times 2$ minors of the Jacobian of $F_0$ to $F_0$, giving 
\[
F_1(x,y,z,x',y',z') = \begin{bmatrix} 
                    (x+y+z)y \\ 
                    y' \\
                    y \\
                    x + 2y + z \\
                    y 
                    \end{bmatrix}.
\]
By Theorem \ref{thm:isointersect}, $Z\cap V(f)$ is an irreducible component of $\V(F_1(x,y,z,x,y,z))$ and $G(x,y,z):=F_1(x,y,z,x,y,z)$ suffices as a witness system for $Z\cap V(f)$. We note that in this example, removing the redundancies in $G$ would in fact show that $[x+z,y]$ is sufficient as a witness system for $Z\cap V(f)$. 
\end{example}

\section{Computation of Real Smooth Points -- Equidimensional Case}\label{sec:basiccase}

This section contains 
our main results for the special case when the complex algebraic variety is equidimensional. In subsequent secions we consider the general case, where we use deformations and limits of algebraic sets.   


\begin{theorem}\label{thm:noperturb}
Let $f_1, \ldots, f_{s}\in \R[x_1, \ldots, x_n]$ and assume that $V:=V(f_1, \ldots, f_{s})\subset \C^n$ 
is equidimensional of dimension $n-s$. 
Suppose that $g\in \R[x_1, \ldots, x_n]$ 
satisfies the following conditions:
\begin{enumerate}
    \item $\Sing(V)\cap \R^n\subset V(g)$;
    \item $\dim \left(V\cap V(g)\right)<n-s$.
\end{enumerate}    
Then the set of points where $g$ restricted to $V\cap\R^n$ attains its extreme values intersects
each bounded connected component of $\,\left(V\setminus \Sing(V)\right)\cap \R^n$.
\end{theorem}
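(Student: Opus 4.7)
Plan: Fix an arbitrary connected component $C$ of $(V\setminus \Sing(V))\cap \R^n$ and show that $g$ attains an extreme value on the (compact) closure $\overline{C}\subset V\cap\R^n$ at a point of $C$ itself. This suffices because such an extremum then lies in $C$ and is a smooth point of $V$.

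The first easy step is structural. The smooth locus $V\setminus \Sing(V)$ is Zariski open in $V$, so $(V\setminus \Sing(V))\cap\R^n$ is open in $V\cap\R^n$; hence $C$ is open in $V\cap\R^n$, its closure $\overline{C}$ is compact, and the frontier $\overline{C}\setminus C$ is contained in $\Sing(V)\cap\R^n$. By hypothesis (1), $\Sing(V)\cap\R^n\subset V(g)$, so $g\equiv 0$ on $\overline{C}\setminus C$.

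The heart of the argument is showing that $g$ does \emph{not} vanish identically on $C$. Suppose, toward a contradiction, that $g|_C\equiv 0$, and pick any $z\in C$. Since $z$ is smooth on $V$, it lies in a unique irreducible component $V'$ of $V$, and the equidimensionality assumption forces $\dim V'=d$ with $z$ smooth on $V'$ as well. By Theorem~\ref{thm:realconvdim} applied to $V'$, we have $\dim_\R(V'\cap\R^n)=d$, so locally at $z$ the set $V'\cap\R^n$ is a real-analytic manifold of real dimension $d$. Because $C$ is open in $V\cap\R^n$, it contains a full neighborhood $U$ of $z$ in $V'\cap\R^n$, and by assumption $g$ vanishes on $U$. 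Thus $U\subset V\cap V(g)$, which means $\dim_\R\bigl(V\cap V(g)\bigr)\geq d$. But hypothesis (2) gives $\dim_\C\bigl(V\cap V(g)\bigr)\leq d-1$, and the real dimension of a complex algebraic set is at most its complex dimension, contradicting $\dim_\R U = d$.

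With $g|_C\not\equiv 0$ established, compactness of $\overline{C}$ gives a maximum $M$ and a minimum $m$ of $g$ on $\overline{C}$. Since $g\equiv 0$ on $\overline{C}\setminus C$ but $g$ takes a nonzero value somewhere in $C$, at least one of $M>0$ or $m<0$ must hold; the corresponding extremum cannot be attained on $\overline{C}\setminus C\subset V(g)$, so it is attained in $C$, producing a smooth real point of $V$ where $g$ takes an extreme value. The main obstacle is the non-vanishing step: it is where both of the technical hypotheses and Theorem~\ref{thm:realconvdim} are used, and the subtle point is the passage from $g$ vanishing on a relatively open piece of $V'\cap\R^n$ to a lower bound on $\dim_\R(V\cap V(g))$—everything else reduces to continuity and compactness.
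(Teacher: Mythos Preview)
Your proof is correct and follows essentially the same strategy as the paper: use compactness and the extreme value theorem on the closure of a component, observe that $g$ vanishes on the frontier by hypothesis~(1), and rule out $g\equiv 0$ on the component via the dimension comparison in hypothesis~(2) together with Theorem~\ref{thm:realconvdim}. The only organizational difference is that the paper first isolates a lemma about the components of $(V\setminus V(g))\cap\R^n$ and then passes to components of $(V\setminus\Sing(V))\cap\R^n$ via the inclusion $\Sing(V)\cap\R^n\subset V(g)$, whereas you work directly with a component $C$ of $(V\setminus\Sing(V))\cap\R^n$ and give a more explicit per-component non-vanishing argument; your route is arguably cleaner for exactly the step you flag as subtle.
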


The proof of this theorem is based on the following lemma.

\begin{lemma}
\label{lem:noperturb}
Let $V$ be as in Theorem \ref{thm:noperturb}.
  Let $g\in \R[x_1, \ldots, x_n]$  such that
$
\dim \left(V\cap V(g)\right)<n-s.
$
    Then, either $\,\left(V\setminus V(g)\right)\cap \R^n = \emptyset$ or $g$ restricted to $V\cap \R^n$ 
    attains a non-zero extreme value on each bounded connected component of  $\,\left(V\setminus V(g)\right)\cap \R^n$.
\end{lemma}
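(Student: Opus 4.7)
The plan is to exploit the compactness of $V\cap\R^n$ together with the key topological fact that the boundary of each connected component of $(V\setminus V(g))\cap \R^n$, taken inside $V\cap\R^n$, is contained in $V(g)$. Once that inclusion is in hand, the lemma reduces to the extreme value theorem.

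Fix a connected component $C$ of $(V\setminus V(g))\cap\R^n$; we may assume $C\neq\emptyset$, else there is nothing to prove. Since $V\cap\R^n$ and $V(g)\cap\R^n$ are real algebraic (hence semi-algebraic) sets, $V\cap\R^n$ is locally connected, and $(V\setminus V(g))\cap\R^n$ is an open subset of $V\cap\R^n$ whose connected components are therefore open. Let $\overline{C}$ denote the closure of $C$ in $V\cap\R^n$; it is compact because $V\cap\R^n$ is.

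The decisive step is to show $\overline{C}\setminus C\subseteq V(g)$. If some $x\in\overline{C}\setminus C$ were not in $V(g)$, it would lie in another connected component $C'$ of $(V\setminus V(g))\cap\R^n$. Because $C'$ is open in $V\cap\R^n$ and $x$ is a limit point of $C$, points of $C$ would meet the neighborhood $C'$ of $x$, forcing $C\cap C'\neq\emptyset$ and contradicting that distinct components are disjoint. With this inclusion established, continuity of $g$ and compactness of $\overline{C}$ give a maximum $M$ and minimum $m$ of $g|_{\overline{C}}$. Any $x_0\in C$ satisfies $g(x_0)\neq 0$, so either $M\geq g(x_0)>0$ or $m\leq g(x_0)<0$; in the first case (the other is symmetric), the value $M>0$ cannot be attained on $\overline{C}\setminus C\subseteq V(g)$ where $g\equiv 0$, so $M$ is achieved at a point of $C$, yielding a non-zero extreme value of $g$ on that component.

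I expect the main subtlety to be the inclusion $\overline{C}\setminus C\subseteq V(g)$, which genuinely depends on invoking local connectedness (via the semi-algebraic structure of $V\cap\R^n$) to ensure that connected components of the open set $(V\setminus V(g))\cap\R^n$ are themselves open. The remainder is a direct application of the extreme value theorem. It is worth noting that the argument uses neither the equidimensionality of $V$ nor the explicit form of the defining equations $f_1,\dots,f_{n-d}$; only compactness of $V\cap\R^n$ and the hypothesis that $C\neq\emptyset$ are needed.
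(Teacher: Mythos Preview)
Your proof is correct and follows essentially the same route as the paper's: take the closure $\overline{C}$ of a component, invoke compactness and the extreme value theorem, and observe that a nonzero extremum must occur inside $C$ because $\overline{C}\setminus C\subseteq V(g)$. The paper leaves that last inclusion implicit (``so it is in $C$''), whereas you spell it out, which is an improvement in clarity.

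One small remark: you do not actually need local connectedness. Since $U:=(V\setminus V(g))\cap\R^n$ is open in $V\cap\R^n$ and connected components are always \emph{closed} in their ambient space, $U\setminus C$ is open in $U$, hence open in $V\cap\R^n$; this already forces any limit point of $C$ lying in $U$ to belong to $C$, giving $\overline{C}\setminus C\subseteq V(g)$ directly. Your argument via openness of $C'$ is fine, but the appeal to semi-algebraic local connectedness is an unnecessary hypothesis. Your closing observation that neither equidimensionality nor the specific form of the $f_i$ (nor, in fact, the dimension bound on $V\cap V(g)$) is used in the argument is accurate.
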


\begin{proof} Assume that  $\,\left(V\setminus V(g)\right)\cap \R^n \neq  \emptyset$ and let $C$ be a bounded connected component of
the set $(V\setminus  V(g))\cap \R^n$.
Since  $C \not\subset V (g)$, there exists  $ x \in C$  with $g(x) \neq 0$. Let $\overline{C}$  be the Euclidean
closure of $C$ so that $\overline{C}\subset V \cap \R^n$ is closed and bounded,
and $g$ vanishes identically on $\overline{C}\setminus C$. By the extreme value theorem,
$g$ attains both a minimum and a maximum on $\overline{C}$. Since $g$ is not identically zero on $\overline{C}$,
either the minimum or the maximum value of $g$ on $\overline{C}$ must be nonzero, so~$g$
attains a non-zero extreme value on $C$. 
\end{proof}

\begin{proof}[Proof of Theorem~\ref{thm:noperturb}]
Assume that  $\,\left(V\setminus \Sing(V)\right)\cap \R^n \neq  \emptyset$. By Theorem \ref{thm:realconvdim}, $\dim_\R V\cap \R^n =n-s$.  By (2), $\,\left(V\setminus V(g)\right)\cap \R^n \neq  \emptyset$. 
By~(1), $\,\left(V\setminus V(g)\right)\cap \R^n \subset\left(V\setminus \Sing(V)\right)\cap \R^n $,
so the bounded connected components of $\left(V\setminus V(g) \right)\cap \R^n $ are subsets of the bounded connected components of $\,\left(V\setminus \Sing(V)\right)\cap \R^n$. By Lemma \ref{lem:noperturb},~$g$ restricted to $V\cap \R^n$ attains a non-zero extreme value on each bounded connected component of  $\,\left(V\setminus V(g)\right)\cap \R^n$, thus yielding a point in every bounded connected component of $\left(V\setminus \Sing(V)\right)\cap \R^n $.
\end{proof}

{\sc Algorithm \ref{alg:RSP2}} in Section \ref{sec:smooth2}
computes real smooth points when $V(f_1, \ldots, f_s)$ is not equidimensional by using deformations and limits. 
However, the same algorithm can be used in the equidimensional case with input $f_1, \ldots, f_s$ and $\ab=0\in \R^s$, i.e~without~deformation. 

\begin{figure}[!hb]
    \centering
    \includegraphics[scale=0.3]{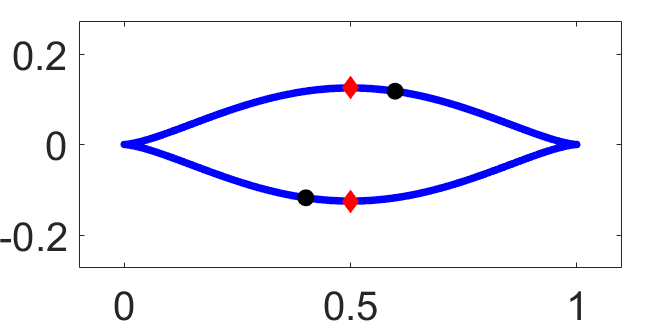}
    \caption{``Thom's lips"}\label{fig:Thom}
\end{figure}

\begin{example}
An example of a real curve with two singular cusps
called ``Thom's lips" defined by $f = y^2-(x(1-x))^3$
is shown in Figure~\ref{fig:Thom}.
An obvious choice of $g$ which satisfies the conditions of Theorem~\ref{thm:noperturb} is $g=x(1-x)$. Using Lagrange multipliers to optimize with respect to $g$ results in two points $(0.5,\pm 0.125)$ plotted as red diamonds.
Alternatively, the polynomial~$g$ can be constructed 
algorithmically (see Section~\ref{sec:compg}) 
yielding, e.g., \mbox{$g=3(2x-1)(x(1-x))^2+2y$}
which produces two points plotted as black circles,
approximately $(0.5987,0.1178)$
and $(0.4013,-0.1178)$.
Both yield a
real smooth point on each
of the two connected components 
of \mbox{$\left(V \setminus \Sing(V)\right)\cap\R^n$}. 
We note that the first choice of $g$ demonstrates that when \Sing(V) is 0-dimensional, defining $g$ as a product of a coordinate of these points is a simple way to satisfy the conditions of Theorem~\ref{thm:noperturb}. The second choice of $g$ demonstrates the general method described in Section~\ref{sec:compg} which works in every dimension.
\end{example}

\section{Application to Kuramoto model}

The Kuramoto model from \cite{Kuramoto1975} is a dynamical system 
used to model synchronization amongst $n$ coupled oscillators.  
The maximum number of equilibria (i.e. real solutions to steady-state equations) 
for $n\geq4$ remains an open problem with details discussed in \cite{Owenetal2018}.
The following confirms the conjecture in \cite{KuramotoConjecture}
for $n = 4$.

\begin{theorem}\label{thm:Kuramoto}
The maximum number of equilibria for
the Kuramoto model with $n = 4$ oscillators is $10$.
\end{theorem}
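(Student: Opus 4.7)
The plan is to split the theorem into the lower bound (at least $10$ equilibria are attainable) and the upper bound (never more than $10$). The lower bound is the easier direction: exhibit an explicit choice of natural frequencies $\omega$ for which the steady-state system admits $10$ real solutions, as suggested in \cite{KuramotoConjecture}, and certify the $10$ approximate solutions produced by homotopy continuation using $\alpha$-theory as described in the remark following the Algorithm. The rest of the proof proposal concerns the upper bound, which is where the techniques of the paper are essential.

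First I would set up the algebraic geometry. After gauge-fixing one angle to zero and using $\sum_i \omega_i = 0$, the Kuramoto equilibrium conditions for $n=4$ become a polynomial system $F(z;\omega) = 0$ in three variables $z = (z_1,z_2,z_3)$ with parameters $\omega = (\omega_1,\omega_2,\omega_3) \in \R^3$, obtained by the standard substitution $z_j = e^{\mathrm{i}\theta_j}$ and clearing denominators. For generic $\omega$ the system is zero-dimensional with a bounded complex root count $N$, and the number of real solutions is locally constant on $\R^3 \setminus V(\Delta)$, where $\Delta \in \R[\omega]$ is the discriminant of $F$ with respect to $z$. Equivalently, on each connected component $C$ of the complement of the real discriminant locus, the real root count is a constant $r(C)$, so proving the upper bound reduces to showing that $\max_C r(C) \leq 10$.

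Next I would apply the algorithm of Section~4 to the real semi-algebraic set $\R^3 \setminus V(\Delta)$, intersected with a sufficiently large ball as in Proposition~2.7, in order to compute a real smooth sample point $\omega^{(C)}$ in every compact connected component $C$. This is the step where Theorem~3.3 and Theorem~3.7 give a crucial advantage over distance-based critical-point methods: because $\Delta$ is highly singular (many components meet in real codimension one along ``caustic'' surfaces), distance-to-a-point extrema will typically land on $V(\Delta)$ itself and yield no information on fiber counts, whereas the polar-variety construction of Section~5 together with the isosingular deflation of Section~6 produces, for each component, a function $g$ and a witness system so that optimising $g$ over $V(L_e)$ provably lands at a smooth point of $C$. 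For each such $\omega^{(C)}$ I then solve $F(z;\omega^{(C)})=0$ by homotopy continuation, certify the resulting solutions via $\alpha$-theory, and record the number that are real; because $r$ is constant on $C$, this determines $r(C)$ exactly, and taking the maximum over all sampled components verifies $\max_C r(C) = 10$.

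The main obstacle is completeness: I must argue that the finitely many compact components actually found by the algorithm exhaust all $\omega$ that could produce more than the ``generic'' count. This requires two ingredients. First, a compactification/asymptotic argument ensuring that for $\|\omega\|$ outside an explicit ball the real root count is provably at most the count at infinity, so the search can be restricted to a bounded box to which Proposition~2.7 applies. Second, a guarantee that the isosingular deflation sequences computed in Section~6 genuinely produce witness systems for every irreducible component of $\lim_{t\to 0}\mathrm{crit}(V(f-t\xi),\pi_i)$ arising from $\Delta$, so that no component of $\R^3 \setminus V(\Delta)$ is silently skipped; this is exactly the content of Theorem~6.2 combined with Theorem~6.3. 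Once both points are secured, the enumeration is exhaustive, and combining the upper bound with the explicit lower bound gives the stated maximum of~$10$.
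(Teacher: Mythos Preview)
Your high-level strategy---decompose $\R^3$ by the discriminant locus, sample one point per connected component, and count real roots there---matches the paper exactly. But you misidentify which tool from the paper does the sampling. The relevant set is the \emph{open} complement $\R^3\setminus V(D)$, and every point of $\R^3$ is already smooth; there is no singular locus to avoid, so the polar-variety and isosingular-deflation machinery of Sections~5--6 is irrelevant here. The paper instead uses the most elementary instance of Lemma~\ref{lem:noperturb}: take $V=\C^3$ and $g=D$. On each compact connected component of $\R^3\setminus V(D)$ the discriminant $D$ is nonzero in the interior and vanishes on the boundary, so it attains a nonzero extreme value at an interior point, which is a critical point of $D$. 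Hence one simply solves $\nabla D=0$ on $\R^3$ and discards the solutions with $D=0$; no Lagrange multipliers, no deflation, no witness systems.

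Two smaller points. First, boundedness does not require an asymptotic argument: the equations $\omega_i=\frac14\sum_j\sin(\theta_i-\theta_j)$ force $|\omega_i|\le 3/4$, so only compact components matter and they all lie in an explicit box. Second, your invocation of ``Theorem~6.3'' and witness systems to guarantee completeness is misplaced; completeness in the paper comes from certifying (via \texttt{alphaCertified} and a symbolic degree count in \texttt{Macaulay2}) that \emph{all} real solutions of $\nabla D=0$ have been found, not from deflation. Once those critical points are certified, evaluating the real-root count of $F=0$ at each (again certified) finishes the proof.
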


The steady-state equations for the $n =4$ Kuramoto model
are 
$$\mbox{$f_i(\theta;\omega) = \omega_i - \frac{1}{4}\sum_{j=1}^4 \sin(\theta_i-\theta_j) = 0, \hbox{~for~}i = 1,\dots,4$}$$
parameterized by the natural frequencies $\omega_i\in\R$.  
Since only the angle differences matter, one can assume $\theta_4 = 0$
and observe a necessary condition for equilibria is 
$$0=f_1+f_2+f_3+f_4=\omega_1+\omega_2+\omega_3+\omega_4,$$
i.e., assume $\omega_4 = -(\omega_1+\omega_2+\omega_3)$.
Substituting $s_i=\sin(\theta_i)$ and $c_i = \cos(\theta_i)$ yields 
$$
\mbox{$
F(s,c;\omega) = \left\{\omega_i - \frac{1}{4} \sum_{j=1}^4 (s_ic_j - s_jc_i), s_i^2 + c_i^2 - 1, \hbox{~for~}i = 1,2,3\right\}$}$$
which is a polynomial system with variables $s = (s_1,s_2,s_3)$ and $c = (c_1,c_2,c_3)$,
parameters $\omega = (\omega_1,\omega_2,\omega_3)$,
and constants $s_4 = 0$ and $c_4 = 1$.

The goal is to compute the maximum number of isolated real solutions 
of $F = 0$ as $\omega$ varies over $\R^3$.  Let $D(\omega)$ be the discriminant polynomial of the system $F$, a polynomial in $\omega$ of degree~$48$.
The number of real solutions of $F$ is constant  in each connected component of $\R^3\setminus V(D)$. 
Since it is easy to see that there can 
be no real solutions if $|\omega_i|\geq \frac{n-1}{n}=0.75$, we need to compute at least one interior point
in each of the bounded connected components of $\R^3\setminus V(D)$.    
Applying Lemma \ref{lem:noperturb} with $f=0$ and $g=D$, i.e., by computing the real solutions of $\nabla D=0$ and $D\neq 0$,
accomplishes this task.  Exploiting symmetry 
and utilizing {\tt Bertini} (\cite{Bertini}), {\tt alphaCertified} (\cite{HauSot2012}),
and {\tt Macaulay2} (\cite{Macaulay2})  all solutions have been found and certified.
In fact, this computation showed that all real critical points of $D$ arose, up to symmetry, along
two slices shown in Figure~\ref{fig:Kuramoto}.
A similar computation then counted the number of real
solutions to $F = 0$ showing that the maximum number of equilibria is $10$.
All code used in these computations is available at \url{dx.doi.org/10.7274/r0-5c1t-jw53}.

\begin{figure}[!ht]
    \centering
    \[
    \begin{array}{cc}
    \includegraphics[scale=0.3]{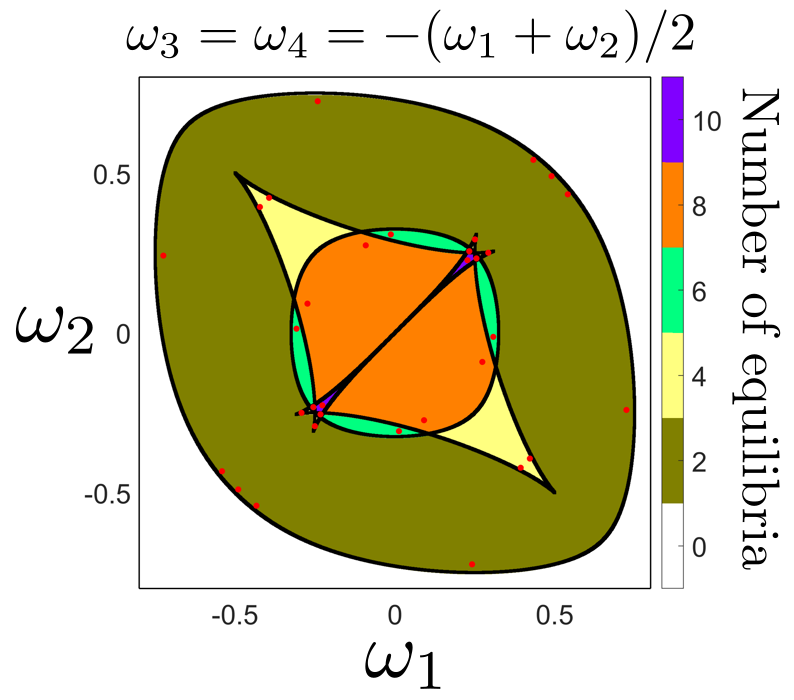} & \includegraphics[scale=0.3]{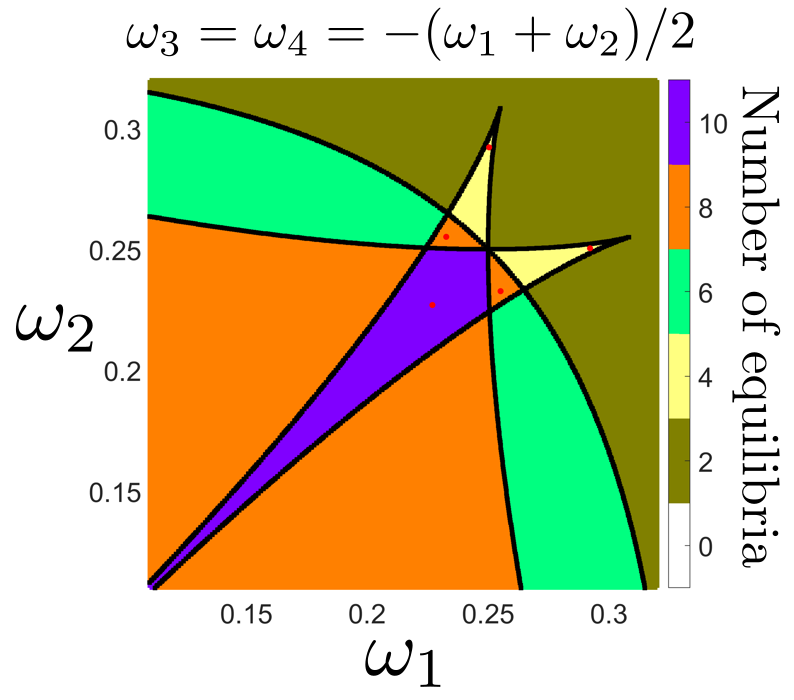} \\ 
    \hbox{(a)~one slice} & \hbox{(b)~zoomed in} \\ \\
    \includegraphics[scale=0.3]{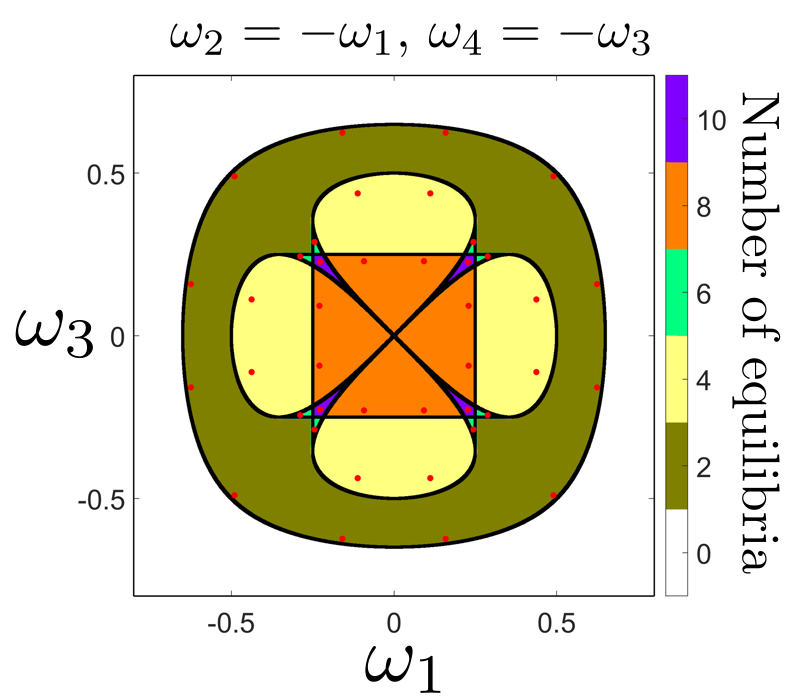} & \includegraphics[scale=0.3]{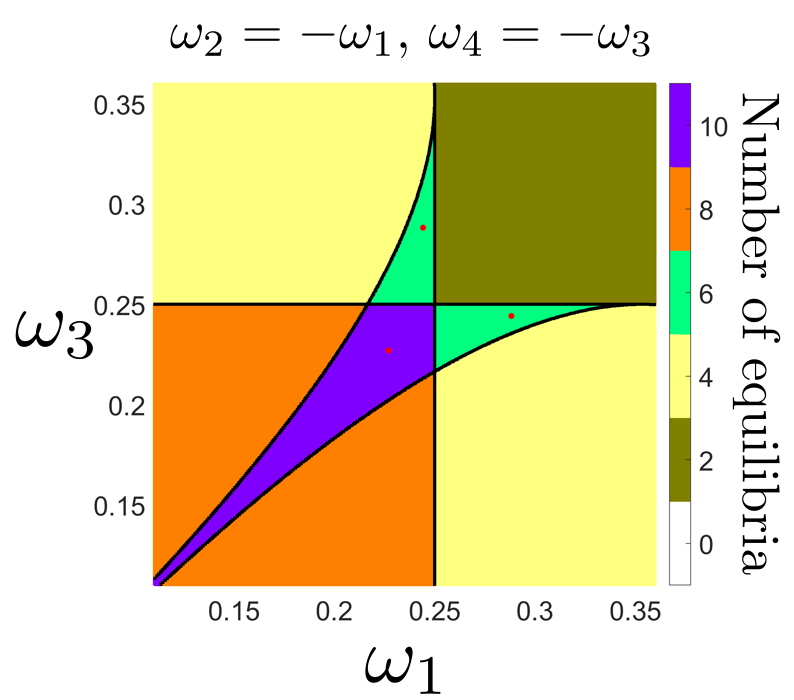} \\
    \hbox{(c)~other slice} & \hbox{(d)~zoomed in}
    \end{array} \]
    \caption{Compact connected regions and critical points for the Kuramoto model with $n = 4$}\label{fig:Kuramoto}
\end{figure}


\section{Perturbations and Limits of Real Hypersurfaces}\label{sec:RealPerturb}

In this section, we begin to construct the tools for an algorithm using the same ideas as in Section \ref{sec:basiccase}, but in full generality so the results will hold when $\V(f_1, \ldots, f_{s})$ is not equidimensional, i.e. there are some components of dimension greater than $n-s$. To do this, a standard approach (which can be seen in \cite{MohabElias2013} and \cite{BanSaf2015}) is to perturb the defining polynomials of the algebraic set by infinitesimals to obtain an equidimensional variety, which has a limit as the infinitesimals approach zero that is equidimensional and contains the $n-s$ dimensional components of $\V(f_1, \ldots, f_{s})$.



First, we need the definition of Puiseux series. 
\begin{definition}\label{def:Puiseux}
Let $\KK=\R$ or $\C$ and denote by $\KK\langle\varepsilon\rangle$ the {\em field of Puiseux series} over $\KK$, i.e.
$$
\KK\langle\varepsilon\rangle:=\left\{\sum_{i\geq i_0} a_i \varepsilon^{i/q}\;:\; i_0\in \Z, q\in \Z_{>0}, a_i\in\KK\right\}.
$$
A Puiseux series $z=\sum_{i\geq i_0} a_i \varepsilon^{i/q}\in \KK\langle\varepsilon\rangle$ is called {\em bounded} if $i_0\geq 0$.
\end{definition}

We note that the field of real Puiseux series $\R\langle \varepsilon \rangle$ is a real closed field, thus the intermediate value theorem holds (c.f. \cite[Theorem 2.11]{BasuPollackRoyBook}). 

We also note that for complex Puiseux series $\C\langle \varepsilon \rangle$, similar results hold  when we replace real closed with algebraically closed (c.f. \cite[Theorem 2.92]{BasuPolackRoy2006}). 

We establish a notation for the concept of an {\em extension} of a semi-algebraic set $S \subset \R^n$ to $\R \langle \varepsilon \rangle$. 

\begin{definition}\label{def:extension}
Given a semi-algebraic set $S \subset \R^n$, the {\em extension} of $S$ to $\R \langle \varepsilon \rangle^n$, denoted $\text{Ext}(S)$, is the semi-algebraic subset $S' \subset \R \langle \varepsilon \rangle^n$ defined by the same equations and inequalities as $S$, but considering their solutions in $\R \langle \varepsilon \rangle^n$. For a polynomial map $\varphi:S\rightarrow S'$ where $S\subset R^n$, $S'\subset \R^m$ semialgebraic sets, $\text{Ext}(\varphi) $ denotes the map $\varphi': \text{Ext}(S)\rightarrow \text{Ext}(S')$ defined by the same polynomials as $\phi$. 
\end{definition}

Assume $f_1, \ldots, f_s\in \R[\bx]$ and $\varepsilon >0$ a real infinitesimal. Let  $F = f_1^2 + \cdots + f_s^2$ and note that $\V(f_1, \hdots, f_s) \cap \R^n = \V(F) \cap \R^n$. Consider $V_{\varepsilon}:= \V(F- \varepsilon)\subset \C\langle \varepsilon \rangle$ as the perturbed version of the algebraic set $\V(F)$. One reason for us to perturb our algebraic set is to obtain smoothness. 

\begin{lemma}\cite[Lemma 3.5]{RouRoySaf2000}\label{lem:perturbsmooth}
$V_\varepsilon$ is a smooth hypersurface. 
\end{lemma}

 The following result \cite[Proposition 12.36]{BasuPollackRoyBook} states that semi-algebraicity is preserved as $\varepsilon$ limits to 0. 

\begin{lemma}\label{lemma:limsemi}
Let $S \subset \R\langle \varepsilon \rangle^n$ be a semi-algebraic set. Then $\lim_{\varepsilon \rightarrow 0}(S)$ is a closed semi-algebraic set. Furthermore, if $S$ is bounded and connected, then $\lim_{\varepsilon \rightarrow 0}(S)$ is connected. 
\end{lemma}

The next proposition on the limits of perturbed connected components of a real algebraic set appeared in the unpublished work \cite{MohabElias2013}. We restate and prove it here for completeness. 

\begin{proposition}\label{prop:connectedlimits}
Assume $f_1, \ldots, f_s\in \R[\bx], F = f_1^2 + \cdots f_s^2$, and $\varepsilon >0$ a real infinitesimal. Let  $V_\varepsilon :=\V(F-\varepsilon)\subset \C\langle \varepsilon \rangle^n$ and  $V:=\lim_{\varepsilon\rightarrow 0}V_\varepsilon \subset \C^n$. Suppose $C$ is a connected component of $V \cap \R^n.$ Then: 
\begin{enumerate}
\item there exist connected components $C_{\varepsilon, 1}, \hdots, C_{\varepsilon,l}$ of $V_\varepsilon \cap \R\langle \varepsilon\rangle^n$ such that 
$$C = \bigcup_{i=1}^l \lim_{\varepsilon \rightarrow 0} C_{\varepsilon, i}$$
\item if $C$ is bounded by some open ball $B \subset \R^n$ and does not intersect the boundary of $B$, then $C_{\varepsilon, i}$ is bounded by and does not intersect the boundary of $\text{Ext}(B , \R\langle \varepsilon \rangle^n)$ for $1 \leq i \leq l$. 
\end{enumerate}
\end{proposition}

\begin{proof} 
Let $\bz \in C$. Then there exists a connected component $S \subset \R^n \setminus \V(F)$  such that $\bz \in \overline{S}$. Then there exists some $\bp \in S$ such that $\bp \in B(\bz,r)$ for $r >0.$ We note that since $F$ is nonnegative over $\R^n$, $F(\bp) >0$. By \cite[Theorem 3.19]{BasuPollackRoyBook}, there exists a continuous semi-algebraic function $\gamma: [0,1] \rightarrow \overline{S}$ with $\gamma(0) = \bz$, $\gamma(1)=\bp$ and $\gamma(t)\in S$ for all $t \in (0,1]$. Note that we have $(F \circ \gamma)(0)=F(\bz)=0$ and $(F\circ \gamma)(1)=F(\bp)>0$. 

Let $F' :=\text{Ext}(F)$ and $\gamma':= \text{Ext}(\gamma)$, where $\text{Ext}$ is the extension as in Definition \ref{def:extension}. By the infinitesimal property of $\varepsilon$, $\text{Ext}([0,1])$  includes all Puiseux series with constant term in $[0,1]$. The Intermediate Value Theorem  applied to $F' \circ \gamma'$ gives some $t_\varepsilon \in \text{Ext}([0,1], \R\langle \varepsilon \rangle)$ such that $(F' \circ \gamma')(t_\varepsilon) = \varepsilon$. Let $\bz_\varepsilon :=\gamma'(t_\varepsilon)$. Then $\lim_{\varepsilon \rightarrow 0}(\bz_\varepsilon) = \bz$. Let $C_{\bz_{\varepsilon}}$ be the connected component of $V_\varepsilon \cap \R\langle \varepsilon \rangle^n$ containing $\bz_{\varepsilon}$, and associate $\bz$ to that component. Since there are finitely many connected components of $V_\varepsilon \cap \R \langle \varepsilon \rangle^n$, as we run through all $\bz \in C$, a subset of these connected components are of the form $C_{\bz_{\varepsilon}}$ with $\lim_{\varepsilon \rightarrow 0}(\bz_\varepsilon) = \bz$ for some $\bz\in C$. We denote these components of $V_\varepsilon \cap \R \langle \varepsilon \rangle^n$ by $C_{\varepsilon, 1}, \hdots, C_{\varepsilon,l}$. Clearly $C \subset \cup_{i=1}^l \lim_{\varepsilon \rightarrow 0} C_{\varepsilon, i}$.

Now suppose $\bz' \in \lim_{\varepsilon \rightarrow 0} C_{\varepsilon, i}$ for some $1 \leq i \leq l$. Then there exists some $\bz'_\varepsilon \in C_{\varepsilon, i}$ such that $\lim_{\varepsilon \rightarrow 0} \bz'_{\varepsilon} = \bz'.$ We recall that $C_{\varepsilon, i}$ is associated to some $\bz \in C$, i.e. there exists some $\bz_{\varepsilon} \in C_{\varepsilon, i}$ such that $\lim_{\varepsilon \rightarrow 0} \bz_{\varepsilon} = \bz$. Since $C_{\varepsilon, i}$ is connected, there exists some continuous semi-algebraic function $\gamma:[0,1] \rightarrow C_{\varepsilon, i}$ such that $\gamma(0) = \bz'_{\varepsilon}$ and $\gamma(1) = \bz$ and $\Gamma := \gamma([0,1])$ is a connected semi-algebraic set. By the preservation of closed and boundedness under semi-algebraic mapping, $\Gamma$ is closed and bounded and by Lemma \ref{lemma:limsemi}, $\lim_{\varepsilon \rightarrow 0} \Gamma$ is connected. Furthermore, we note that $\lim_{\varepsilon \rightarrow 0} \Gamma \subset V \cap \R^n, \lim_{\varepsilon \rightarrow 0}
(\gamma(0)) = \bz'$, and $\lim_{\varepsilon \rightarrow 0}(\gamma(1)) = \bz.$ Hence $\bz' \in C$ and $\cup_{i=1}^l \lim_{\varepsilon \rightarrow 0} C_{\varepsilon, i} \subset C$, thus we have shown (1) as required. 

Now suppose $C$ is bounded by some ball $B \subset \R^n$ and does not intersect the boundary of $B$. Let $\bz_\varepsilon \in C_{\varepsilon,i}$ such that $\lim_{\varepsilon \rightarrow 0} \bz_\varepsilon \in C$. For sake of contradiction, suppose $\bz'_\varepsilon \in \overline{C_{\varepsilon, i} \setminus\text{Ext}(B )}   $. Since $C_{\varepsilon, i}$ is connected, there exists a continuous semi-algebraic function $\gamma:\text{Ext}[0,1] \rightarrow C_{\varepsilon, i}$ such that $\gamma(0) = \bz_\varepsilon, \gamma(1) = \bz'_\varepsilon$ and $\Gamma := \gamma(\text{Ext}[0,1])$ is a connected semi-algebraic set. The Intermediate Value Theorem  applied to the polynomial defining the boundary of $B$ composed by $\gamma$ gives some $t_\varepsilon \in \text{Ext}[0,1]$ such that $\gamma(t_\varepsilon)$ is in the boundary of $\text{Ext}(B)$. Then $\lim_{\varepsilon \rightarrow 0} \gamma(t_\varepsilon)$ is in the boundary of $B.$

By the preservation of closed and boundedness under semi-algebraic mappings, $\Gamma$ is closed and bounded. 
By Lemma \ref{lemma:limsemi}, $\lim_{\varepsilon \rightarrow 0} \Gamma$ is connected. Then $\lim_{\varepsilon \rightarrow 0} \Gamma \subset C$ and
\mbox{$\lim_{\varepsilon \rightarrow 0} \gamma(t_\varepsilon) \in C$}. However, this contradicts $C$ intersecting the boundary of $B$. Thus $C_{\varepsilon, i}$ is bounded by (and does not intersect the boundary of) $\text{Ext}(B,\R\langle \varepsilon \rangle^n)$. 
\end{proof}

A natural question arises: Why it is necessary for us to perturb the sum of squares of the polynomials, rather than each polynomial separately, when we are working in this context over the real numbers? The following example illustrates what can go wrong over the reals, as opposed to the complex numbers. 

\begin{example}\label{ex:realcounter}
This example illustrates that for $s>1$ with   $V_\varepsilon :=\V(f_1 - a_1\varepsilon, \ldots, f_s - a_s\varepsilon)\subset \C\langle \varepsilon \rangle^n$ and $V:=\lim_{\varepsilon\rightarrow 0}V_\varepsilon\subset \C^n$, we may have connected components $C \subset V\cap \R^n $ that can only be extended to the complex part of $V_\varepsilon$, i.e. $C\not \subset \lim_{\varepsilon\rightarrow 0}\left(V_\varepsilon\cap \R\langle \varepsilon\rangle^n \right)$.

Let $s=n=2$, $f_1=x_1^2+x_2^2-1,$ $f_2=-(x_1-2)^2-x_2^2+1$. Then with $V_\varepsilon :=\V(f_1 - \varepsilon,f_2-\varepsilon)\subset \C\langle \varepsilon \rangle^2$ we have 
$$
\left(\lim_{\varepsilon\rightarrow 0}V_\varepsilon\right)\cap \R^2=\{(1, 0)\} 
$$
is a single point, so this point is the only connected component. Note that  any points in $V_\varepsilon$ have to satisfy  $f_1=f_2$, so in particular they will correspond to points on the intersections of the graphs of $f_1$ and $f_2$.  The graphs of $f_1$ and $f_2$ are the surfaces $P_1:=\V(x_3-
x_1^2-x_2^2+1)$ and $P_2:=\V(x_3+
(x_1-2)^2+x_2^2-1))$, respectively, which are two 3-dimensional parabolas. 

Over the complex numbers, $P_1$ and $P_2$ intersect in two complex lines
$$
\V(x_2 \pm i(x_1 - 1), x_3 - 2x_1 + 2)\subset \C^3
$$
so whenever $x_3=\varepsilon$, i.e. $x_1=\frac{\varepsilon}{2}+1$, the points 
$
\left(\frac{\varepsilon}{2}+1, \pm i\frac{\varepsilon}{2}\right)
$
are in $ V_\varepsilon$ and 
$$\lim_{\varepsilon\rightarrow 0}\left(\frac{\varepsilon}{2}+1, \pm i\frac{\varepsilon}{2}\right) = (1, 0).
$$ 
This also shows that  $$\lim_{\varepsilon\rightarrow 0}\left(V_\varepsilon\cap \R\langle \varepsilon\rangle^n\right)=\emptyset.$$
In particular,  over the reals
$P_1$ is a convex parabola with vertex $(0,0,-1)$, $P_2$ is a concave parabola with vertex $(2,0,1)$, and they tangentially intersect at $(1,0,0)$. So for any real $\varepsilon\neq 0$  we have 
$$
P_1 \cap P_2 \cap \V(x_3-\varepsilon)\cap \R^3=\emptyset.
$$

Alternatively, for $F:=f_1^2+f_2^2$, if we define $V_\varepsilon:=\V(F-\varepsilon)$, then we proved above that  
$$
\left(\lim_{\varepsilon\rightarrow 0} V_\varepsilon\right) \cap \R^2 = \lim_{\varepsilon\rightarrow 0}\left( V_\varepsilon \cap \R\langle\varepsilon\rangle^2\right).
$$
\end{example}


\section{Perturbations and Limits Polar Varieties}

There are many different definitions of polar varieties; for a survey and comparison of different notions see \cite{HarrisThesis2021}. In this paper  we  reduce to the hypersurface case by taking the sum of squares of the given real polynomials. For this case,  without loss of generality (in particular, when we apply a change of variables later in our paper), we can choose the appropriate number of partials to obtain a simplified definition of polar varieties. 
In practice, other notions of polar varieties may work better. We chose this presentation for its simplified notation and presentation, following the approach of \cite{MohabElias2013}, for conciseness. 

\begin{definition}\label{def:polar}
 Let $F\in \C[\bx]$ be square-free and $V=\V(F)\subset\C^n$. Consider the projections $\pi_i(x_1, \ldots, x_n)= (x_1, \ldots, x_i)$ for $i=1, \ldots, n$. The {\em  polar variety associated to $\pi_i$} of $V$ is defined as 
$$
\crit(V, \pi_i):=\V\left(F, \frac{\partial F}{\partial x_{i+1}}, \ldots,  \frac{\partial F}{\partial x_{n}}\right)\subset \C^{n} \quad i=1, \ldots, n, 
$$
based on how the polynomials defining this algebraic set correspond to the notion of critical points of a map. 
\end{definition}

We note that a significant difference in how this definition is stated compared to other notions of polar varieties is that it does not exclude the singular locus of an algebraic set $V$ from the polar varieties associated to $V$. We will address the smoothness of $V$ going forward via a change of variables and perturbations, so in fact it is natural to make this simplified modification for our context.

We use the following notation to perform a change of variables. 
\begin{definition}\label{def:changeofvar}
Let $F\in \R[\bx]$, $V=\V(F)\subset \C^n$,
and $A\in {\rm GL}_n(\R)$. 
Then,  we denote $F^A(x):=F(Ax)$, i.e. $V^A:=\V(F^A)$ is the image of $V$ via the map $x \mapsto A^{-1}x$. 
\end{definition}

Next, we state some known results on polar varieties which will be used in the proofs of our algorithms. In particular, polar varieties provide a nice way for us to lower the complex dimension of an algebraic set without losing the real points in the set. 

\begin{theorem}\label{thm:polardim} \cite[Proposition 3]{Banketal1997}
Let $F \in \R[\bx]$ be non-constant, square-free, and define a smooth algebraic set $V := \V(F) \subset \C^n$. Then there exists a non-empty Zariski open set $\mathcal{A} \subset {\rm GL}_n(\C)$ such that for all $A \in {\rm GL}_n(\R) \cap \mathcal{A}$ and $1 \leq i \leq n, \crit(V^A, \pi_i)$ is either empty or equidimensional of complex dimension $i-1.$
\end{theorem}

We note that in the above reference, the proof of this theorem consists of characterizing the set of matrices for which the result does not hold and showing that those matrices make up a Zariski closed set ${\rm GL}_n(\C) \setminus \mathcal{A}$, i.e. the complement of $\mathcal{A}$. 

\begin{corollary}\label{cor:polardim}
Let $F$ and $V$ be as in Theorem \ref{thm:polardim}. Suppose $\varepsilon$ is an infinitesimal and $V_\varepsilon := \V(F-\varepsilon) \subset \C\langle \varepsilon \rangle ^n$ is a smooth algebraic set on the field of Puiseux series as in Definition \ref{def:Puiseux}. Then there exists a non-empty Zariski open set $\mathcal{A} \subset {\rm GL}_n(\C\langle \varepsilon \rangle)$ such that for all $A \in {\rm GL}_n(\R) \cap \mathcal{A}$ and $1 \leq i \leq n, \crit(V^A_\varepsilon, \pi_i)$ is either empty or equidimensional of complex dimension $i-1.$ 
\end{corollary}

The proof of \ref{cor:polardim} follows from the following lemma. 

\begin{lemma}\label{lem:noepsilon}
Let $\mathcal{A}:= {\rm GL}_n(\C\langle \varepsilon \rangle)\setminus \V(Q)$ be a non-empty Zariski open subset of ${\rm GL}_n(\C\langle \varepsilon \rangle)$ defined by some polynomial $Q\in\R\langle \varepsilon \rangle[a_{i,j}]_{i,j=1}^n$. Then $\lim_{\varepsilon\rightarrow 0}\mathcal{A}\cap {\rm GL}_n(\R)$ is also a non-empty Zariski open subset of ${\rm GL}_n(\R)$.
\end{lemma}

\begin{proof}
Since $\mathcal{A}$ is non-empty, $Q\neq 0$. $Q$ is a polynomial in the variables $\{a_{i,j}\}_{i,j=1}^n$ with coefficients in $\R\langle \varepsilon \rangle$. We can assume, without loss of generality, that these coefficients are polynomials in $\varepsilon^{\frac{1}{q}}$ for some $q\in \N$ (by multiplying with a possible common denominator of these coefficients). Also, we can assume that $Q$ has minimal degree in $\varepsilon^{\frac{1}{q}}$ among all such polynomials defining ${\rm GL}_n(\C\langle \varepsilon \rangle)\setminus\mathcal{A}$. Thus
$$Q = Q_0 + \varepsilon^{\frac{t}{q}} Q_1$$
where $t \in \Z^+, Q_0 \in \R[a_{i,j}]$, and $Q_1 \in \R[\varepsilon^{\frac{1}{q}}][a_{i,j}]$. If $Q_0=0$, then $Q_1$ has lower degree than $Q$ in $\varepsilon^{\frac{1}{q}}$ and still defines ${\rm GL}_n(\C\langle \varepsilon \rangle)\setminus\mathcal{A}$, a contradiction. Therefore, $Q_0\neq 0$ and 
$$
\V(Q_0)=\lim_{\varepsilon\rightarrow 0} \V(Q)\neq {\rm GL}_n(\C)
$$
so $\lim_{\varepsilon\rightarrow 0}\mathcal{A}\cap {\rm GL}_n(\R)$ is also a non-empty Zariski open subset of ${\rm GL}_n(\R)$.
\end{proof}

The statements of the next two  propositions follow the approach of the unpublished work \cite{MohabElias2013}, so we restate and prove them here.

\begin{proposition}\label{prop:NN}
Suppose $F\in \R[\bx]$ and $\V(F) \cap \R^n$ is bounded. There exists a non-empty Zariski open set $O\in {\rm GL_n(\C)}$  such that for $A \in O\cap GL_n(\R)$, if $V^A=\V(F^A)$ and  $V^A_{\varepsilon}:=\V(F^A-\varepsilon)\subset \C\langle\varepsilon\rangle^{n}$ for $\varepsilon$ infinitesimal, then  \\
\textbf{(i)} for all $1 \leq i \leq n$, $\crit( V^A_{\varepsilon}, \pi_i)$ is either empty or is smooth and equidimensional with complex dimension $i-1$; \\
\textbf{(ii)} for all $\bp\in V^A\cap \R^{n}$, $\pi_d^{-1}(\pi_d(\bp))\cap (V^A\cap\R^{n})$ is finite, where $d$ is greater than or equal to the local real dimension of $V^A$ at $\bp$. 

\end{proposition}

\begin{proof} 
(i)  First note that  $V_{\varepsilon}$ is smooth by Lemma \ref{lem:perturbsmooth}. By Corollary \ref{cor:polardim}, we obtain a non-empty Zariski open set $O_1 \in {\rm GL_n(\C\langle \varepsilon \rangle)}$ such that for $A \in O_1 \cap {\rm GL_n(\R)}$, and  $1 \leq i \leq n$  $\crit(V_\varepsilon^A, \pi_i)$ is either empty or equidimensional with complex dimension $i-1$. \\

\noindent (ii) Let $V:=V(F)$. Since $V \cap \R^n$ is semi-algebraic, we can consider it as a union of connected components $C_1, \hdots, C_l$ with corresponding real dimension $d_1, \hdots, d_l$, as in Definition \ref{def:realdimension}. Then the local real dimension of $V \cap \R^n$ at $\bp$ is given by $\max_{\bp \in \overline{C_i}} d_i$, as in Definition \ref{def:reallocaldimension}. 

Let $V_i$ represent the Zariski closure of each $C_i$ for $1 \leq i \leq l$. Then the corresponding complex dimensions of $V_1, \hdots, V_l$ are also $d_1, \hdots, d_l$. By a version of Noether's normalization lemma (see for example \cite{Logar}), there exists a non-empty Zariski open set $O_{2,i} \in {\rm GL_n(\C)}$ such that for $A \in O_{2,i} \cap {\rm GL_n(\R)}$ and $\bq \in \C^{d_i}$, $\pi_{d_i}^{-1}(\bq) \cap V_i^A$ is finite. Then for $\bq \in \R^{d_i}$, $\pi_{d_i}^{-1}(\bq) \cap C_i$ is finite. 

Let $\bp \in V^A \cap \R^n$ where $A \in O_2 = \cap_{i=1}^l O_{2,i}$. Suppose $d \geq \max_{\bp \in \overline{C_i}} d_i$. Then $\pi_d(\bp) \in \R^d.$ For any $d_i =d$, taking $O_2$ as defined above guarantees $\pi_d^{-1}(\pi_d(\bp))\cap (V^A\cap\R^{n})$ is finite. Furthermore, for any $d_i$ strictly less than $d$, $\pi_d^{-1}(\pi_d(\bp))\cap (V^A\cap\R^{n})$ is still finite for $A \in O_2$ because $\pi_d^{-1}(\pi_d(\bp))\subset \pi_{d_i}^{-1}(\pi_{d_i}(\bp))$.  Taking $O = O_1 \cap O_2$ completes the proof. 
\end{proof}

Now we are ready to state the main result of this section, again emphasizing that it was originally proven in the unpublished paper \cite{MohabElias2013}.

\begin{proposition} \label{prop:limpolar}
Let $F, g_1, \ldots, g_m\in \R[\bx]$ and let $\varepsilon$  be infinitesimal. Suppose $F \geq 0$ on $\R^{n}$, $\V(F)\cap \R^{n}$ is bounded, and suppose that \textbf{(i)} and \textbf{(ii)} from Proposition \ref{prop:NN} hold with $A= I \in GL_n(\R)$ for $V=\V(F)$ and  $V_{\varepsilon}:=\V(F-\varepsilon)\subset \C\langle\varepsilon\rangle^{n}$. Then for $i=0, \ldots, n$, $\lim_{\varepsilon\rightarrow 0}\crit(V_{\varepsilon}, \pi_i)$ is equidimensional of dimension $i-1$ and for
$$U:=\{\bx\in \R^{n}\;:\; g_1(\bx)>0, \ldots, g_m(\bx)>0\}\subset \R^{n}
$$ and $S:=\V(F)\cap U$  we have 
$$\left(\lim_{\varepsilon\rightarrow 0}\crit(V_{\varepsilon}, \pi_i)\right)\cap U=S \quad \Leftrightarrow \quad \dim_\R(S)\leq i-1.
$$
\end{proposition}

\begin{proof} 
($\Rightarrow$) Fix some $0 \leq i \leq n$ and suppose $$\left(\lim_{\varepsilon\rightarrow 0}\crit(V_{\varepsilon}, \pi_i)\right)\cap U=S.$$ 
If $S = \emptyset,$ then $\dim_\R(S) = -1 \leq i-1$ and we are done. 
So assume $S \not = \emptyset.$ Since $S := \V(F) \cap U$, $\V(F) \cap \R^n \not = \emptyset$ and $S$ is bounded by our initial assumption of $\V(F) \cap \R^n$ being bounded.

By Lemma \ref{polarintersect}, for $1\leq i \leq n$, $\crit(V_\varepsilon, \pi_i) \not = \emptyset$. Then by Proposition \ref{prop:NN}, $\crit(V_\varepsilon, \pi_i)$ is equidimensional of complex dimension $i-1$. Then $\lim_{\varepsilon \rightarrow 0}(\crit(V_\varepsilon, \pi_i))$ has complex dimension $i-1$. So the real dimension of $\lim_{\varepsilon \rightarrow 0}(\crit(V_\varepsilon, \pi_i))$ is $\leq i -1$ and thus 
$$\left(\lim_{\varepsilon\rightarrow 0}\crit(V_{\varepsilon}, \pi_i)\right)\cap U=S$$
has real dimension $\leq i-1$. 

\noindent($\Leftarrow$) Now suppose $\dim_\R(S) \leq i-1$. Since $\crit(V_\varepsilon, \pi_i) \subset V_\varepsilon, \lim_{\varepsilon \rightarrow 0} V_\varepsilon = \V(F)$, and $\V(F) \cap U = S$, 
$$\left(\lim_{\varepsilon\rightarrow 0}\crit(V_{\varepsilon}, \pi_i)\right)\cap U \subset S.$$

If $S = \emptyset,$
$$S \subset \left(\lim_{\varepsilon\rightarrow 0}\crit(V_{\varepsilon}, \pi_i)\right)\cap U$$
and we are done. So suppose $S \not = \emptyset$ and take $\bz = (z_1, \hdots, z_n) \in S.$ Since $S = \V(F) \cap U$, $\bz \in U$ and $\bz \in \V(F) \cap \R^n$. Furthermore, the local real dimension of $S$ at $\bz$ is $\leq i-1$, so the local real dimension of $\V(F) \cap \R^n$ at $\bz$ is also $\leq i-1$.

 Define $F'$ as the function $F$ where the first $i-1$ coordinates have been evaluated at the first $i-1$ coordinate values of $\bz$,i.e.
$$F' := F(x_i, \hdots, x_n) = F(z_1, \hdots, z_{i-1}, x_i, \hdots, x_n).$$
We note that since $F$ is nonnegative over $\R^n$, $F'$ is nonnegative over $\R^{n-i+1}$.
Also define $\bz' :=(z_i, \hdots,z_n)$, $V':=\V(F')$ and $V'_\varepsilon := \V(F'-\varepsilon) \subset \C\langle \varepsilon \rangle^{n-i+1}$, and the canonical projection $\varphi_i(\bx) = x_i$ and the respective $\varphi_i'(x_i, \hdots, x_n) = x_i$. Note that $\bz'$ is  isolated in $V'\cap \R^{n-i+1}$ since $\pi^{-1}_{i-1}(\pi_{i-1}(\bz)) \cap \V(F) \cap \R^n$ is finite by \textbf{(ii)} of Proposition \ref{prop:NN}. 

Applying Lemma \ref{lem:limpolarone} to $\bz'$ and $V'$ we get some $\bz'_\varepsilon \in \crit(V'_\varepsilon,\varphi'_i)$ such that $\lim_{\varepsilon \rightarrow 0}\bz'_\varepsilon = z'=(z_i, \hdots, z_n).$ Define $\bz_\varepsilon := (z_1, \hdots, z_{i-1}, \bz'_\varepsilon)$ and $V^*_\varepsilon = V_\varepsilon \cap \text{Ext}(\pi_{i-1}^{-1}(z_1, \hdots, z_{i-1}))$, where $\text{Ext}$ is the extension from Definition \ref{def:extension}. Then $\bz_\varepsilon \in \crit(V^*_\varepsilon, \varphi_i).$ By Lemma \ref{lem:polarcontain}, $\bz_\varepsilon \in \crit(V_\varepsilon, \pi_i)$. Since $\lim_{\varepsilon \rightarrow 0}\bz_\varepsilon = \bz$,
$$S \subset \left(\lim_{\varepsilon\rightarrow 0}\crit(V_{\varepsilon}, \pi_i)\right)\cap U.$$
\end{proof}

The following lemmas were used in the proof of the above proposition.

\begin{lemma}\label{lem:limpolarone}
Assume $f_1, \ldots, f_s\in \R[\bx], F = f_1^2 + \cdots f_s^2$, and $\varepsilon >0$ a real infinitesimal. Let  $V_\varepsilon :=\V(F-\varepsilon)\subset \C\langle \varepsilon \rangle^n$ and  $V:=\lim_{\varepsilon\rightarrow 0}V_\varepsilon \subset \C^n$. Suppose $\bz \in V \cap \R^n$ and there exists a neighborhood $B(\bz, r) \subset \R^n$ for some $r>0$ such that $B(\bz,r) \cap V \cap \R^n$ is a finite set. Then there exists $\bz_\varepsilon \in \crit(V_\varepsilon, \pi_1)$ such that $\lim_{\varepsilon \rightarrow 0}\bz_\varepsilon = \bz.$ 
\end{lemma}

\begin{proof}
Since $B(\bz,r) \cap V \cap \R^n$ is a finite set, there exists some $r'>0$ such that $\bz$ is the only point in $B(\bz,r') \cap V \cap \R^n$. So $\{\bz\}$ is a bounded connected component of $V \cap \R^n$. Then by Proposition \ref{prop:connectedlimits}, there exist connected components $ C_{\varepsilon, 1}, \hdots, C_{\varepsilon, l} $ of $V_\varepsilon \cap \R\langle \varepsilon \rangle^n$ such that $\{\bz\} = \cup_{i=1}^l \lim_{\varepsilon \rightarrow 0} C_{\varepsilon, i}$ and $C_{\varepsilon, i}$ is bounded by and does not intersect the boundary of $\text{Ext}(B(\bz, r'))\subset  \R \langle \varepsilon \rangle^n$ for $1 \leq i \leq l$. 

Then for $1 \leq i \leq l$, we have $C_{\varepsilon, i} \subset \text{Ext}(B(\bz, r') )$ and is closed and bounded. By the extreme value theorem, $C_{\varepsilon, i} \cap \crit(V_\varepsilon, \pi_1) \not = \emptyset$. Since $\lim_{\varepsilon \rightarrow 0} C_{\varepsilon, i} = \{\bz\}$, all $\bz_\varepsilon \in C_{\varepsilon, i} \cap \crit(V_\varepsilon, \pi_1) \not = \emptyset$ are such that $\lim_{\varepsilon 
\rightarrow 0} \bz_\varepsilon = \bz$, and we are done. 
\end{proof}

\begin{lemma}\label{polarintersect}
Assume $f_1, \ldots, f_s\in \R[\bx], F = f_1^2 + \cdots f_s^2$, and $\varepsilon >0$ a real infinitesimal. Let  $V_\varepsilon :=\V(F-\varepsilon)\subset \C\langle \varepsilon \rangle^n$ and  $V:=\lim_{\varepsilon\rightarrow 0}V_\varepsilon \subset \C^n$ with $V\cap \R^n$ nonempty and bounded. Then $\crit(V_\varepsilon, \pi_i)$ is nonempty and intersects each bounded connected components of $V_\varepsilon \cap \R\langle \varepsilon \rangle^n$ for all $1 \leq i \leq n.$
\end{lemma}

\begin{proof}
Since $V \cap \R^n$ is nonempty, there exists some nonempty connected component $C \subset V \cap \R^n$. Let $\bz \in C$. Since $V \cap \R^n$ is bounded, there exists some $r>0$ such that the $C \subset B(\bz, r) \subset \R^n$ and $C$ does not intersect the boundary of $B(\bz,r)$. So by Proposition \ref{prop:connectedlimits}, there exist connected components $C_{\varepsilon, 1}, \hdots, C_{\varepsilon, l}$ of $V_\varepsilon \cap \R\langle \varepsilon \rangle^n$ such that $C = \cup_{j=1}^l \lim_{\varepsilon \rightarrow 0} C_{\varepsilon, j}$ and $C_{\varepsilon, j}$ is bounded by and does not intersect the boundary of $\text{Ext}(B(\bz, r)) \subset  \R \langle \varepsilon \rangle^n$ for $1 \leq j \leq l$. 
Thus, for $1 \leq j \leq l$, we have $C_{\varepsilon, j} $  is closed and bounded. Hence, by Lemma \ref{lem:limpolarone}, $C_{\varepsilon, j} \cap \crit(V_\varepsilon, \pi_1) \not = \emptyset$. Since by definition $\crit(V_\varepsilon, \pi_1) \subset \crit(V_\varepsilon, \pi_i),$ we are done.
\end{proof}

\begin{lemma}\label{lem:polarcontain}
Let $F \in \C[\bx]$ and $\alpha = (\alpha_1, \hdots, \alpha_{i-1}) \in \C^i.$ Suppose $V_{i,\alpha}$ is the algebraic set $\V(F) \cap \pi_{i-1}^{-1}(\alpha)$ and $\varphi_i$ is the projection defined by $\varphi_i(\bx) = x_i$. Then 
$$\crit(V_{i, \alpha},\varphi_i) \subset \crit(\V(F),\pi_i).$$
\end{lemma}

\begin{proof}
We recall that by definition
$$\crit(\V(F), \pi_i) = \V\bigg(F, \frac{\partial F}{\partial x_{i+1}}, \hdots, \frac{\partial F}{\partial x_n}\bigg).$$
By how we have defined $V_{i,\alpha}$, $\crit(V_{i, \alpha},\varphi_i)$ is the algebraic set defined by the polynomials $F, x_1 -\alpha_1, \hdots, x_{i-1} - \alpha_{i-1}$ and the maximal minors of the Jacobian matrix of the polynomials. Then in fact, 
$$\crit(V_{i, \alpha},\varphi_i) = \V\bigg(F, x_1 -\alpha_1, \hdots, x_{i-1} - \alpha_{i-1}, \frac{\partial F}{\partial x_{i+1}}, \hdots, \frac{\partial F}{\partial x_n}\bigg)$$
and we are done. 
\end{proof}

\section{Shifting from Infinitesimals to Complex Perturbations} \label{sec:complpert}

In this section, we establish results in order to formulate our algorithms so they can be implemented not only purely symbolically, but also in a numerical algebraic geometry context. Here we track our perturbed set to its limit variety by employing homotopy continuation while our perturbation constant follows a complex arc towards zero. 

To this end, we first shift from the paradigm of real infinitesimals to arbitrarily small real numbers, as established by the following result from real algebraic geometry. 

\begin{theorem}\label{thm:shifttoreal} \cite[Proposition 3.17]{BasuPollackRoyBook}
A result holds over $\R\langle \varepsilon \rangle$ if and only if there exists some $e_0 \in \R$ such that it also holds for all $e \in (0, e_0) \cap \R$. 
\end{theorem} 

For our purposes, we also want to establish that we are able to make this switch in terms of witness set computations, which are done over the complex numbers. Therefore, the results in this section can be formulated in terms of a more general perturbation setup, based on the following result from Faug\'ere et al. on perturbing the defining polynomials of an algebraic set. 

\begin{lemma}\label{lem:genericity} \cite[Lemma 1]{Faugereetal2008}
Let $f_1, \ldots, f_s\in \R[\bx]$ and fix $l\leq s$ and $\{i_1, \ldots, i_l\}\subset \{1, \ldots, s\}$. Then there exists a Zariski closed subset ${\mathcal A}\times {\mathcal E}\subset \C^s\times \C $ such that for all $\ab:= (a_1, \ldots, a_s)\in \R^s\setminus {\mathcal A}$ and $e\in \R\setminus {\mathcal E}$,
the ideal generated by the polynomials 
$
f_{i_1}-ea_{i_1}, \ldots, f_{i_l}-ea_{i_l}
$
is a radical equidimensional ideal and $\V(f_{i_1}-ea_{i_1}, \ldots, f_{i_l}-ea_{i_l})$ is either empty or smooth of dimension $n-l$. 
\end{lemma}

\begin{remark}
We will use these more general deformations in Sections \ref{sec:complpert}, \ref{sec:defwit} and \ref{sec:compg}, where the results are valid over $\C$. In the previous sections and in Section \ref{sec:smooth2}   we need to use a more specialized perturbation of the polar varieties for our results  to hold over the reals. See Example \ref{ex:realcounter} why the more general deformations may not work over the reals. 
\end{remark}

Using Lemma \ref{lem:genericity}, we define a genericity assumption which holds over the complex numbers. 

\begin{definition}\label{def:AA}
Consider polynomials $f_1, \ldots, f_s\in \R[\bx]$ and point
\mbox{$\ab=(a_1, \ldots, a_s)\in \R^{s}$}. We say that $f_1, \ldots f_s $ and $\ab$ satisfy Assumption {\tt (A)} if 
\begin{enumerate}
\item[{\tt (A)}:] There exists $e_0 > 0$ such that for all $0<e\leq e_0$, 
the polynomials $f_{1}-ea_{1}, \ldots, f_{s}-ea_{s}$
generate a radical equidimensional ideal and $V^\ab_e := \V(f_{1}-ea_{1}, \ldots, f_{s}-ea_{s})$ is  smooth and has dimension $n-s$.
\end{enumerate}
\end{definition}

Assumption ({\tt A}) in Definition \ref{def:AA} guarantees the existence of some $e_0>0$; however, in practice this number can be arbitrarily small. Instead of trying to compute an $e_0$ that works for a given system $f_1, \ldots, f_s$,  the next result shows that we can choose a generic $\xi\in \C$ with $|\xi|=1$ to replace $e_0$ with $\xi$ and $e$ with $t\xi$, where $t\in (0,1]$.

\begin{proposition}
 \label{lem:limpolarcomplex}
Let $f_1, f_2, \ldots, f_s\in \R[\bx]$, $\ab=(a_1, \ldots, a_s)\in \R^s$ and let $\varepsilon$  be infinitesimal.  Assume that  $V^\ab_\varepsilon:=\V(f_1-\varepsilon a_1, \ldots, f_{s}-\varepsilon a_s)\subset \C\langle\varepsilon\rangle^n $ is smooth and equidimensional of dimension $n-s$. Then for all but finitely many   $\xi\in \C$ with $|\xi|=1$ and for all $ t\in (0,1]$,   $V^\ab_{t\xi}:=\V(f_1-t\xi a_1,\ldots, f_{s}-t\xi a_s)\subset \C^{n}$  is smooth and equidimensional of dimension $n-s$ and  in that case we have $$
\lim_{\varepsilon\rightarrow 0}V^\ab_{\varepsilon}=\lim_{t\rightarrow 0}V^\ab_{t\xi}.
$$
\end{proposition}

\begin{proof}
First, we show that for all but a finite number of choices of $\xi\in \C$,  $V_\xi^\ab=\V(f_1-\xi a_1,\ldots, f_{s}-\xi a_s) $ is smooth by proving that the set of ``bad" choices $\xi$ is a proper Zariski closed subset of  $\C$, thus finite. Note that from our assumptions on $V^\ab_\varepsilon$ we get that $f_1, \ldots, f_s$ and $\ab$ satisfy Assumption {\tt (A)} for some $\e_0>0$. Consider the ideal using new variables $x_0$,  $z$ and $\lambda_1, \ldots, \lambda_s$:
\begin{eqnarray*}I:=\langle f_1^{(h)}-a_1z x_0^{\deg(f_1)},\ldots, f_{s}^{(h)}- a_s z x_0^{\deg(f_s)}\rangle \\
+\langle(\lambda_1\nabla(f_1)+\ldots+\lambda_s \nabla(f_s))^{(h)} \rangle.
\end{eqnarray*}
Here $g^{(h)}$ denotes the homogenization of $g\in \R[x_1, \ldots, x_n]$ by the variable $x_0$ and $\nabla$ is the differential operator in the variables $x_1, \ldots, x_n$.  Thus $I$ is bi-homogeneous in the variables $(\lambda_1,\ldots \lambda_s)$ and $(x_0, \ldots, x_n)$. Then the projection of $X(I)\subset \PP^{n}\times \PP^s\times \C$ onto $\C$ is a Zariski closed subset of~$\C$, and since $e_0$ is not in the projection, the projection is not $\C$, thus a finite set  $Z$. Clearly, for $\xi\in \C\setminus Z$  and for all $\bp\in V^\ab_{\xi}$, the Jacobian of $f_1-\xi a_1,\ldots, f_{s}-\xi a_s$ at $\bp$ has rank $s$, thus $V^\ab_{\xi}$ is smooth and equidimensional of dimension $n-s$.    This also implies that for all but finitely many $\xi\in \C$ with $|\xi|=1$ and for all $t\in (0,1]$ we have that  $V_{t\xi}^\ab=\V(f_1-t\xi a_1,\ldots, f_{s}-t\xi a_s) $ is smooth and equidimensional.

Fix  $\xi\in \C\setminus Z$ with $|\xi|=1$ so $V_{t\xi}^\ab$ is smooth and equidimensional. 
To prove the second claim, let $L_1,\hdots, L_{n-s} \in \C[\bx]$ be linear polynomials such that $\LL := \V(L_1, \hdots, L_{n-s})$ is a generic linear space of codimension $n-s$ which intersects both $\lim\limits_{\varepsilon\rightarrow 0}V^\ab_{\varepsilon}$ and $\lim\limits_{t\rightarrow 0}V^\ab_{t\xi}$ transversely. By our assumptions,  both $V^\ab_{\varepsilon}  \cap \LL\subset \C\langle \varepsilon\rangle ^n$ and $V^\ab_{t\xi} \cap \LL\subset \C^n$ are finite for any fixed $t\in (0,1]$.

Then since $\LL$ does not depend on either $\varepsilon$ or $t$, 
\begin{align*} 
\lim_{\varepsilon \rightarrow 0} \bigg(V^\ab_{\varepsilon} \cap \LL \bigg)  &= \lim_{\varepsilon \rightarrow 0} V^\ab_{\varepsilon} \cap \LL 
\;\;\text{ and }\;\; \\
\lim_{t \rightarrow 0} \bigg(V^\ab_{t\xi} \cap \LL \bigg)  &= \lim_{t\rightarrow 0} V^\ab_{t\xi} \cap \LL.
\end{align*} 
Since $\LL$ is a generic linear space which intersects both $\lim\limits_{\varepsilon\rightarrow 0}V^\ab_{\varepsilon}$ and $\lim\limits_{t\rightarrow 0}V_{t\xi}$ transversely, 
 $\lim_{\varepsilon \rightarrow 0} V^\ab_{\varepsilon} \cap \LL =  \lim_{t\rightarrow 0} V^\ab_{t\xi} \cap \LL $
implies
$ \lim_{\varepsilon \rightarrow 0} V^\ab_{\varepsilon} =  \lim_{t\rightarrow 0} V^\ab_{t\xi}.$

So it is sufficient to prove that  
$$\lim_{\varepsilon \rightarrow 0} \bigg(V^\ab_{\varepsilon} \cap \LL \bigg)  = \lim_{t \rightarrow 0} \bigg(V^\ab_{t\xi} \cap \LL \bigg)$$
to achieve the desired result.

Let $H \subset \R[\bx, \varepsilon] $ be the system
$$H:= H(\bx, \varepsilon) =
 \left[  
f_1-\varepsilon a_1, \ldots, f_{s}-\varepsilon a_s, L_1,
\ldots, L_{n-s} \right]. $$
Let $S \subset \C \langle \varepsilon \rangle^n$ be the finite set of bounded solutions of $H = 0$, where bounded is as defined for Puiseux series in Definition \ref{def:Puiseux}. Then for all $\bx(\varepsilon) \in S$, let  
$ \lim_{\varepsilon \rightarrow 0} \bx(\varepsilon) = \bx_0 \in \C^n$. Furthermore, by the definition of $H$, 
$\lim_{\varepsilon \rightarrow 0} S = \lim_{\varepsilon \rightarrow 0} \bigg( V^\ab_\varepsilon \cap \LL\bigg).$

Using Puiseux's Theorem (e.g.,  see \cite[Chap.~7]{Fischer2001}
and \cite[\S~10.2, Thm.~A.3.2, Cor.~A.3.3]{SommeseWampler2005}),
one can ensure that each of the finitely many Puiseux series under consideration
has a positive radius of convergence.  
In particular, since $\varepsilon > 0$ is a real infinitesimal and $x(\varepsilon)\in S$ is 
bounded, each $\bx(\varepsilon)\in S$ has an interval of 
convergence $(0, e_x) \subset \R$ for some $e_x > 0$. 
Choose $e_0 > 0$ such that $e_0 < \min\limits_{x \in S} e_x$. 
Now we make a switch, and instead of considering $\bx(\varepsilon)\in S$ an element $\C \langle \varepsilon \rangle^n$, we consider $\bx$ as a function $\C\rightarrow \C^n$ which is well-defined  for $z \in \C$ with $|z| \leq e_0$. Abusing the notation, we denote by $\bx$ both the Puiseux series and  the corresponding complex function.

Recall that if a pair $( \bx^*, z^*)\in \C^n\times \C$ has the property that $H(\bx^*,z^*) = 0$ and $\det JH(\bx^*,z^*) = 0$, where $JH$ is the Jacobian matrix of $H$ with respect to the $\bx$ variables, then $z^*$ is a critical point and $\bx^*$ is a branch point for  for $H(\bx,z) = 0$. Let $\Cc$ denote the set of all critical points of $H(\bx,z) = 0$.  
Then, since $|S| < \infty$, we know $|\Cc| < \infty$.

Now let $z \in \Cc$. Then there exists some $\xi_z\in \C$ with $|\xi_z|=1$ such that for $t\in \R$, the path $\xi_zt$ passes through~$z$, so that $\bx(t\xi_z) \in \C^n$ has some branching point. Let 
$Z = \{  \xi_z :  z \in \Cc\}$, a subset of the unit circle in $\C$.  
Since $|\Cc| < \infty$, $|Z| <\infty$. Then, for any $\xi \in \C \setminus Z$ with $|\xi|=1$, we have that $\bx(t\xi) \in \C^n$ for $t \in (0,1]$ does not pass through branching points. Since $\C\setminus Z$ is Zariski dense in $\C$, the same
 holds for generic $\xi \in \C$ with $|\xi|=1$.

So let $\xi \in \C\setminus Z$ with $|\xi|=1$  and $H_{\xi} \subset \C^{n+1}$ be the homotopy defined by the system
\begin{equation*}
H_{\xi}:=H_{\xi}(\bx,t) = 
\left[  
f_1-t\xi a_1, \ldots, f_{s}-t\xi a_s,L_1 \ldots, L_{n-s} 
\right].
\end{equation*}
The limit points of the solutions of $H_{\xi}$ are $\lim\limits_{t \rightarrow 0} \bigg( V^\ab_{t\xi} \cap \LL \bigg)$.
Let $T \subset \C^n$ be the roots of $H_{\xi}(\bx,1)$. Then $|T| = | V^\ab_{\varepsilon} \cap \LL| < \infty$. Furthermore, by the above argument, the homotopy paths for $H_{\xi}$ are exactly described by the points in $V^\ab_{\varepsilon} \cap \LL \subset \C \langle \varepsilon \rangle^n$ by replacing~$\varepsilon$ with $t\xi$. Hence,
$$\lim_{\varepsilon \rightarrow 0} \bigg(V^\ab_{\varepsilon} \cap \LL \bigg)  = \lim_{t \rightarrow 0} \bigg(V^\ab_{t\xi} \cap \LL \bigg).$$
\end{proof} 

The following illustrates why we take $\xi \in \C \setminus \R$, a process generally known as 
the ``gamma trick,'' e.g., see \cite[Chap.~7]{SommeseWampler2005}

\begin{example}
Let $f(x) = x^3 - 3x^2 + 2x$, $a=1$, and $H_\xi(x,t) = f(x)-t\xi$.
For $\xi=1$, since $H_\xi(x,0) = x(x-1)(x-2)=0$ has three real solutions,
$H_\xi(x,1) = f(x)-1=0$ has one real solution, 
and $H_\xi(x,t)$ has real coefficients, there must be a value of $t\in(0,1)$
such that $H_1(x,t)=0$ has a singular solution, namely $t\approx 0.3849$.  
However, for, say, $\xi = 1 + \sqrt{-1}$, then $H_\xi(x,t)=0$ has three
nonsingular solutions for all $t\in[0,1]$.
A similar statement holds for $\xi \in \C \setminus \R$ 
with algebraic probability one. 
\end{example}

Proposition \ref{lem:limpolarcomplex} gives a proof of correctness for {\sc Witness Points in Limits Algorithm \ref{Alg:WPL}}
which computes a {\em witness point set} (as in Definition \ref{def:witnessset}) of a limit
with algebraic probability one. In Step (iv) of Algorithm \ref{Alg:WPL} we use numerical homotopy continuation method, see \cite{BertiniBook2013} for more details.

\begin{algorithm}[ht]
\caption[Alg:WPL]{\sc WitnessPointsInLimits }
\label{Alg:WPL}
\begin{description}
\item[Input:]  $f_1, \ldots, f_s\in \R[x_1, \ldots, x_n]$, $\ab=(a_1, \ldots, a_s)\in \R^s$ 
and $L=\{L_1, \ldots, L_{n-s}\}\subset \R[x_1, \ldots, x_n]$ generic linear 
polynomials. 
\item[Output:] {${\rm flag}=$TRUE if $\V(f_1-a_1e, \ldots, f_s-a_s e) \cap V(L)$ is 0-dimensional for all sufficiently small $e>0$ and  the finite set of points  in $W:=\lim_{e\rightarrow 0+}\left( \V(f_1-a_1e, \ldots, f_s-a_s e)\cap V(L)\right)$, ${\rm flag}=$FALSE otherwise and $W = \emptyset$.  }
\end{description}
\begin{enumerate}
\item Loop 
\begin{enumerate}[(i)]
\item Choose generic $\xi\in \C$ with $|\xi|=1$. 

\item Define $H_{\xi}(\bx,t) := 
\left[  
f_1-t\xi a_1, \ldots, f_{s}-t\xi a_s, L_1, \ldots, L_{n-s}  
\right].
$
\item If $|\V(H_{\xi}(\bx,1))| = \infty$, exit loop and return ${\rm flag} =$ FALSE, $W = \emptyset$. 

\item Compute $\lim_{t \rightarrow 0} \V(H_{\xi}(\bx,t))$ via a homotopy continuation, starting at $t=1$.

\item If no branch points were hit during homotopy tracking, exit loop and return ${\rm flag}=$ TRUE, $W = \lim_{t \rightarrow 0} \V(H_{\xi}(\bx,t)).$

\end{enumerate}
\end{enumerate} 

\end{algorithm}

\section{Deflated Witness Systems for Limits}\label{sec:defwit}

In this section, we apply some of the main ideas from numerical algebraic geometry following \cite{BertiniBook2013} to complex algebraic sets obtained as limits of perturbed positive dimensional algebraic set. 
We will use the notion of a {\em deflated witness~system}:

\begin{definition}
Let $V\subset \C^n$ be an equidimensional algebraic set and $(F,L,W)$ a witness set for $V$ as in Definition \ref{def:witnessset}. Then if each irreducible component of $V$ has multiplicity one with respect to $F$, $F$ is called a {\em deflated witness system} and $(F,L,W)$ is 
a {\em deflated witness~set} for $V$. 
\end{definition}

When trying to compute a deflated witness system for a variety defined by a limit, difficulties that arise are that the limit points may be singular, arising from multiple paths converging to the same limit point,   
or that the witness system $f=(f_1, \hdots, f_s)$ for the original algebraic set $\V(f)$ is not a witness system for $\lim_{t \rightarrow 0} V^{\ab}_{t\xi}=\lim_{t \rightarrow 0} \V(f_1-a_1t\xi, \ldots, f_s-a_s t\xi)$. This is demonstrated in the following example. 

\begin{example}\label{ex:subtle}
Consider $f(x,y) = (xy, xy-x) \subset \C[x,y]$ and $\ab =(1,0)$. Then for $\xi=1$ we get $V^{\ab}_{t\xi}=V(xy-t, xy-x) $, so for all fixed $t\in (0,1]$ we get $V^{\ab}_{t\xi}=\{(t,1)\}$, so  $\lim_{t\rightarrow 0} V^{\ab}_{t\xi} = \{(0,1)\}$. 
But $ V(f(x,y)) = \V(x)$ is not a witness system for $\{(0,1)\}$. We show that applying a straightforward  isosingular deflation to $f$ does not provide a deflated witness system for $\lim_{t\rightarrow 0} V^{\ab}_{t\xi} = \{(0,1)\}$. We compute 
\[ Jf(x,y) = \begin{bmatrix} 
		y & x \\
		y-1 & x \\ 
		\end{bmatrix}. \] 
For $\bp = (0,1)$, $\rank(Jf(0,1)) = 1$, so the determinant of $Jf(0,1)$. Thus isosingular deflation gives  $F = [xy, xy-x, x]$. 
But $\V(F) = \V(x)$, so $F$ is not a witness system for $\{(0,1)\}$. 

\end{example}

To overcome  these difficulties, we will use Theorem \ref{thm:isointersect} to compute a deflated witness system for the intersection of $V(f_1-a_1t\xi, \ldots, f_s-a_s t\xi)\cap V(t) \subset \C^{n+1}$ (considering $t$ as an extra complex variable, but  $\xi\in \C$ is fixed). 
 The {\sc Deflated Witness System Algorithm \ref{Alg:DWS}} computes a deflated witness system for irreducible components of a variety defined as a limit.

\begin{algorithm}[ht]
\caption[Alg:DWS]{\sc DeflatedWitnessSystem }
\label{Alg:DWS}
\begin{description}
\item[Input:] $f_1, \ldots, f_s\in \R[\bx]$, $\ab=(a_1, \ldots, a_s)\in \R^s$, and $\bp \in V:=\lim_{e\rightarrow 0+}\V(f_1-a_1e, \ldots, f_s-a_s e )$, a generic point on a unique irreducible component $V_\bp$ of $V$.

\item[Output:] A deflated witness system $G\subset \R[\bx]$ for $V_\bp$.
\end{description}

\begin{enumerate}
\item Define $F_0(\bx,t):=(f_1-a_1t, \ldots, f_s-a_s t)\in \R[\bx, t]^s$
and $\bq:=(\bp,0)\in \R^{n+1}$.
\item $F:=\textsf{IsosingularDeflation}(F_0,\bq)$. 
  {\small {\tt // See \cite[Alg.~6.3]{HauensteinWampler2013}}}

\item Define $G_0(\bx):=F(\bx,0)$.

\item $G:=\textsf{IsosingularDeflation}(G_0,\bp).$ \!{\small {\tt // See \cite[Alg.~6.3]{HauensteinWampler2013}}}

\item Return $G$.
\end{enumerate} 
\end{algorithm}

\begin{example}[Example \ref{ex:subtle} cont]
We apply Algorithm \ref{Alg:DWS} for this example. 
We have to do isosingular deflation of $F_0(x,y,t):=(xy-t, xy-x)$, as in Steps (1) and (2). Here, we compute
\[ JF_0(x,y,t) = \begin{bmatrix} 
		y & x & -1 \\
		y-1 & x & 0\\ 
		\end{bmatrix}. \] 

For $\bq = (0,1,0)$, $\rank(JF_0(0,1,0)) = 1$. We get in Step (2)  a deflated  system   $F(x,y,t) = [xy, xy-x,x, y-1]$. Since $G_0(x,y):=F$  is already deflated  for $\bp=(0,1)$, $G=G_0$ is the output.
\end{example}

\begin{theorem}\label{thm:deflcorrect}
Let $f_1, \ldots, f_s$, $\ab$, and $\bp$ as in the input of {\sc Algorithm \ref{Alg:DWS}}. Then $G$, computed by {\sc Algorithm~\ref{Alg:DWS}}, satisfies the output specifications.
\end{theorem}

\begin{proof}
Since $V_\bp$ is an irreducible component of $V$, there exists an irreducible component $Z\subset \V(F_0(\bx,t))\subset \C^{n+1}$ such that   $V_\bp\times\{0\}$ is an irreducible component of $Z\cap \V(t)$
which is an intersection.  Hence, one can apply the isosingular deflation approach applied
to intersections in Theorem \ref{thm:isointersect}.
Although Theorem \ref{thm:isointersect}
would deflate $H_0(\bx,t,t'):=(F_0(\bx,t), t')$ at $\bq':=(\bp, 0, 0)$,
the simplicity of the intersection together with $t'$ contained in $H_0$ easily
shows that one obtains an equivalent deflation as deflating $F_0(\bx,t)$ at $\bq=(\bp,0)$,
resulting in the system  $F(\bx,t)$.
Therefore, $V_\bp$ must be an irreducible component of $\V(F(\bx,0))$ 
so $G_0(\bx):=F(\bx, 0)$ is a witness system for $V_\bp$.  
Since $G_0$ need not be a deflated witness system for $V_\bp$,
one deflates $G_0$ at $\bp$ to yield a deflated witness system $G$ for $V_\bp$.
\end{proof}

\section{Computation of g} \label{sec:compg}

The final key tool required to compute a real smooth point on every bounded connected component of an algebraic set $V$ is a ``well-chosen" polynomial
$g$ that satisfies the conditions of Theorem~\ref{thm:noperturb}, 
i.e.,  $\Sing(V)\cap \R^n\subset \V(g)$ and  $\dim ( V\cap \V(g))< \dim (V)$. 
There exist symbolic methods  to compute such~a~$g$ for an equidimensional variety  $V$ with $\dim(V)=n-s$. 
For example, \cite[Lemma~4.3]{SafYanZhi2018} computes the defining equation $\omega\in \R[x_1, \ldots, x_{n-s+1}]$ of the projection~$\overline{\pi_{n-s+1}(V^A)}$ of a generic linear transformation $V^A$ of $V$ such that this projection  is a hypersurface.  Then, $g$ can be taken
to be one of the partial derivatives of $\omega$.
This idea could be extended to the case when $V$ is not equidimensional using infinitesimal deformations and limits (c.f., \cite{SafEl2018}). In our
{\sc Computation of g Algorithm \ref{Alg:Cg}}, we provide a new approach based on isosingular deflation, as discussed in Subsection~\ref{sec:Iso}, which computes several $g$'s depending on the isosingular deflation sequence of the irreducible components.

\begin{algorithm}[H]
\caption[Alg:Cg]{\sc ComputationOfG }
\label{Alg:Cg}
\begin{description} 
\item[Input:]  $f_1, \ldots, f_s\in \R[\bx]$, $\ab=(a_1, \ldots, a_s)\in \R^s$. 

\item[Output:] $\left\{\left(g_j, (G_j, L, W_j)\right)\;:\;j=1, \ldots, r\right\}$ such that for all $i\neq j\in \{1, \ldots, r\}$, $
V_e^\ab:=\V(f_1-a_1e, \ldots, f_s-a_s e)$ and  $V:=\lim_{e\rightarrow 0+}V_e^\ab $:   
\begin{enumerate}[(i)]
\item $g_j\in \R[\bx]$, $G_j, L\subset \R[\bx]$, and $W_j\subset V$. 
 \item $(G_j, L, W_j)$ is a deflated witness set of some $V_j\subset V$, where $V_j$ is a union of irreducible components of $V$;
\item $V=\bigcup _{j=1}^r V_j$
\item $\Sing(V_j)\subseteq \V(g_j)$
\item $\dim (V_j\cap \V(g_j))< n-s$
\item $\dim (V_i\cap V_j)< n-s$ and $V_i\cap V_j\subseteq \V(g_j). $
\end{enumerate} 
\end{description}
\begin{enumerate}
\item Loop 

\begin{enumerate}
\item Choose a generic system $L\subset \R[\bx]$ of $n-s$ linear polynomials.


\item $({\rm flag},W):=$ \textsf{WitnessPointsInLimits}$(\{f_1, \hdots, f_s\},\textbf{a}, L).$ 
\qquad {\small {\tt // See Algorithm \ref{Alg:WPL}}}

\item If ${\rm flag}=$TRUE, exit loop. 

\end{enumerate}

\item Set $j:=1$. 
\item Loop
\begin{enumerate}
\item Pick some $\bp\in W$.

\item $W_j:=\{\bp\}$.

\item Update $W:=W\setminus\{\bp\}$.

\item $G_j:= \textsf{DeflatedWitnessSystem}(\{f_1, \ldots, f_s\}, \ab, \bp)$. 
\qquad {\small {\tt // See Algorithm \ref{Alg:DWS}}}

{\small {\tt // $G_j \subset \R[\bx]$  is a witness system for the irreducible component $V_\bp\subset V$ containing $\bp$ such that
  $f_1, \hdots, f_s \in G_j$,  $G_j(\bp)=0$ and $\rank \,JG_j(\bp)=s$.}}

\item For all $\bp'\in W$ 

\qquad  If $G_j(\bp')=0$ and $\rank \,JG_j(\bp')=s$, then 

\qquad  \qquad    Update $W_j := W_j\cup\{\bp'\}$ and  $W := W\setminus\{\bp'\}$.

\item Compute $g_j(\bx):=\det(M(\bx))$, where $M$ is a generic rational linear combination of all $s \times s$ 
submatrices of $JG_j(\bx).$

\item If $W\neq \emptyset$, increment $j:=j+1$. 
\end{enumerate}
\end{enumerate}
\end{algorithm}

\begin{theorem}\label{thm:gcorrect}
Let $f_1, \ldots, f_s$, $\,\ab$,  $V_e^\ab$, and $V$  be as in the input and output specifications of 
{\sc Algorithm \ref{Alg:Cg}}. Assume that $
V_e^\ab:=\V(f_1-a_1e, \ldots, f_s-a_s e)$ satisfies Assumption {\tt (A)}.  Then {\sc Algorithm \ref{Alg:Cg}} is correct.
\end{theorem}

\begin{proof}
By our assumption on the genericity of $L$ and Assumption {\tt (A)}, $W$ is finite and each point $\bp \in W$ is a generic point of a unique irreducible components $V_\bp$ of $V$ containing $\bp$. Based on the output of Algorithm \ref{Alg:DWS}, assume that for any $\bp\in W$, in Step (3d) we  compute  $G_j\subset \R[\bx]$ such that the irreducible component $V_{\bp}\subset V$ containing $\bp$ is an irreducible component of $\V(G_j)$, $f_1, \ldots, f_s\in G_j$,  $G_j(\bp)=0$ and $\rank \,JG_j(\bp)=s$. 
Then, $G_j \subset \R[\bx]$ computed in Step (3d) deflates all generic points of $V_\bp$. 
Step (4) adds all other points from $W$ which are deflated by $G_j$.  In particular,
every other point on $V_\bp$ contained in $W$ will be added to $W_j$.
Hence, $(G_j, L , W_j)$ is a deflated witness set for a union of irreducible components of $V$, denoted by $V_j$, proving (ii). Since $\bigcup_j W_j=W$, 
we also get $\bigcup_j V_j=V$, which proves (iii).  
If $\by\in \Sing (V_j)$, then $\rank (JG_j(\by))<s$ so all $s\times s$ minors of $JG_j(\by)$ vanish.
Hence, $g_j(\by)=\det(M(\by))=0$ proving (iv).  
Conversely, for any $\bp'\in W_j$, some $s\times s$ minor of $JG_j(\bp')$ does not vanish at $\bp'$.  Since $g_j$ is a generic choice of combinations of all such minors, $g_j(\bp')\neq 0$ for all  $\bp'\in W_j$. By Assumption {\tt (A)}, $V=\lim_{e\rightarrow 0} V_e^\ab$ is equidimensional of dimension $n-s$, so for all $\bp'\in W$, $\dim V_{\bp'}=n-s$. Since $g_j$ does not vanish identically on $V_{\bp'}$ for any $\bp'\in W_j$, we get $\dim (V_j\cap \V(g_j))< n-s$, proving (v). 

To prove the first claim in (vi), note that each $V_i$ is a union of $(n-s)$-dimensional irreducible components of $V$ and sample points from the irreducible components of $V$ are uniquely assigned to one $W_j$.  Then for $i\neq j$, $V_i$ and $V_j$ cannot share an irreducible component, so their intersection is lower~dimensional. 

To prove the second claim in (vi) we use Theorem \ref{thm:singdeflation} as follows.  
Let $\by\in V_i\cap V_j$.  Suppose that $X$ is an irreducible component of $V_i$ and $Y$ is an irreducible component of $V_j$ such that $\by\in X\cap Y$.  
Let $\xi\in \C$ be generic with $|\xi|=1$, $t$ a complex variable, and denote $f_{\xi}^\ab=f_{\xi}^\ab(x,t) :=(f_1-a_1 t \xi, \ldots, f_s-a_s t \xi)$. 
Then, $X\times \{0\}$ and $Y\times \{0\}$ are irreducible varieties of $\C^{n+1}$ and both are subsets of $V(f_{\xi}^\ab)\subset \C^{n+1}$. 
Therefore, each is contained in a unique isosingular set of $f_\xi^\ab$
denoted by $\Iso_{f_{\xi}^\ab}(X\times \{0\})$ and $\Iso_{f_{\xi}^\ab}(Y\times \{0\})$,
respectively.  Let $F_i(x,t)$ and $F_j(x,t)$ be their corresponding deflated witness systems, respectively. If $F_i=F_j$, i.e. the two isosingular sets of $f_{\xi}^\ab$ are the same,
then $\Iso_{F_j(x,0)}(X)\neq \Iso_{F_j(x,0)}(Y)$ (otherwise $X=Y$) so $\by\in \Sing_{F_j(x,0)}(Y)$.  
Note that by the {\sc Deflated Witness System Algorithm} \ref{Alg:DWS},  $G_j(x)$ is the deflation of $F_j(x,0)$ at a generic point of $V_j$. This implies by Theorem \ref{thm:singdeflation} that $\by\in \Sing_{G_j}(Y)$ and $g_j(\by)=0$.  

If $F_i\neq F_j$, then $(\by,0)$ is in the intersection of two different isosingular sets so $(\by,0)$ has a different deflation sequence than a generic point in $Y\times \{0\}$, i.e.,  $(\by,0)\in \Sing_{f_{\xi}^\ab}(Y\times \{0\})$. By Theorem \ref{thm:singdeflation}, we have that $(\by,0)\in \Sing_{F_j}(Y\times \{0\})$. Denoting the Jacobian by $J:=JF_j(x,t)$, we have that $\rank J(\by,0)<s$ with 
$\rank J(\by',0)=s$ for all generic $\by'\in Y$. Consider $J':=JF_j(x,0)$. (i.e. column of $J$ corresponding to $\partial t$ removed). Note that $Jf(x)$ is a submatrix of $J'$, since $f\subset F_j(x,0)$.
If $\rank J'(\by')=s$ for  generic $\by'\in Y$, then $G_j=F_j(x,0)$, $\by \in \Sing_{G_j}(Y)$, and  $g_j(\by)=0$.  
If $\rank J'(\by')<s$  for  generic $\by'\in Y$, we claim that $\rank J'(\by)<\rank J'(\by')$  for  generic $\by'\in Y$. 
First note that both $\rank Jf(\by)\leq s-1$ and $ \rank Jf(\by')\leq s-1$ for $f=(f_1, \ldots, f_s)$, so without loss of generality (after maybe some Gaussian elimination on these Jacobian matrices), we assume that $\nabla f_1(y)=\nabla f_1(y')=0$.   Note that the $\partial t$ column of $J=JF_j(x,t)$   has the only possibly non-zero constant entries in the rows corresponding to $f_1-a_1t\xi,\ldots, f_s-a_st\xi$.  
Then for a generic $\by'\in Y$ we have $\rank J'(\by')=s-1$, since $\rank J(\by',0)=s$, thus  among all $s\times s$ minors of $J(\by',0)$ some has to be non-zero, and the only possible non-zeros are the ones that are $a_1$ times the $(s-1)\times(s-1)$ minors of $J'(\by')$, thus we must have  $a_1\neq 0$ and $\rank J'(\by')=s-1$. On the other hand, the $s\times s$ minors of $J(\by,0)$ contain all $(s-1)\times(s-1)$ minors of $J'(\by)$ times $a_1$, so all these minors of $J'(\by)$ must be zero. This implies that $\rank J'(\by)<s-1$.   Thus, $\rank J'(\by)<\rank J'(\by')$. 
In particular, $\by\in \Sing_{F_j(x,0)}(Y)$ and by Theorem \ref{thm:singdeflation},
$\by\in \Sing_{G_j}(Y)$ which implies that $g_j(\by)=0$.  This proves (vi), and the theorem.
\end{proof}

One advantage of the approach using isosingular deflation is that, in many problems, the number of iterations in the deflation process is a small constant (zero or one).  In this case, the degrees of the polynomials in the output of both {\sc Deflated Witness Set Algorithm \ref{Alg:DWS}} and {\sc Computation of g Algorithm \ref{Alg:Cg}} are comparable to the maximal degree of the input polynomials $f_1, \ldots, f_s$. 
On the other hand, the degree of the polynomial $\omega\in \R[x_1, \ldots, x_{n-s+1}]$ computed in the symbolic approach in \cite[Lemma 4.3]{SafYanZhi2018} mentioned at the beginning of this section is the degree of $V$ bounded by the product of the degrees of the input polynomials. 
Nonetheless, the disadvantage of our approach is that in the worst case, 
we need as many iterations in the deflation as the multiplicity of the points 
and this may result polynomials with higher degree than the degree of $\omega$ in \cite[Lemma 4.3]{SafYanZhi2018}. We have the following bound on the degree of $g$ as a function on the number of iterations in the deflation:

\begin{proposition}\label{prop:boundg}
Let $f=(f_1, \ldots, f_s)$ and $\ab=(a_1, \ldots, a_s)\in \R^s$ such that $
V_e^\ab:=\V(f_1-a_1e, \ldots, f_s-a_s e)$ satisfies Assumption {\tt (A)}. Let $D:=\max_{i=1}^s \{\deg(f_i)\}$ and fix  $\bp\in V:=\lim_{e\rightarrow 0} V_e^\ab$. If {\sc Algorithm \ref{Alg:DWS}} takes $k$ iterations of the isosingular deflation to output $G\subset \R[\bx]$, the degrees of the polynomials in $G$ are bounded by $s^kD$. Furthermore, if $g(\bx):=\det(M(\bx))\in \R[\bx]$ where $M(\bx)$ is a $s\times s$ submatrix of $JG(\bx)$, then $\deg(g)\leq s^{k+1}D$.
\end{proposition}

\begin{proof}
The first claim follows from the fact that each iteration of the deflation algorithm adds the minors of the Jacobian of the polynomials in the previous iteration, and these minors have size less than $s$. Thus, the degrees of polynomials added to the system in each iteration are  at most $s$ times the  degrees of the polynomials in the  previous iteration. The second claim follows from the first.
\end{proof}

\section{Finite Critical Points of g} 

In this section, we establish a key result characterizing when a function $g$ will have a finite number of critical points over an algebraic set. This is an adaptation of Theorem 36 and Lemma 37 from \cite{Hongetal2020}. 

\begin{definition} Given $f_1, \ldots, f_s, g\in \R[\bx]$. We say that $\bx\in \C^n$ is a {\em critical point} of $g$ for $\V(f_1, \ldots, f_s)$ if $\bx\in V(f_1, \ldots, f_s)$ and 
$$
\nabla g(\bx)\in {\rm span}_\C \left(\nabla f_1(\bx), \ldots, \nabla f_s(\bx)\right),
$$
where $\nabla$ denotes the gradient operation.
\end{definition}

We give the following example to illustrate the possibility of a $g$ having infinite critical points over a smooth $\V$, to motivate why the rest of this section is necessary. 

\begin{example}\label{exinfcritpoints}
Let $f = x^2+4y^2-4xy+2$ and $g=x$. Then there are an infinite number critical points of $g$ over the algebraic set $\V(f)$, defined by $(x,y) = \bigg(x, \frac{x}{2}$\bigg).
\end{example}

We need the following corollary of Sard's theorem from \cite[Theorem A.6.1]{SommeseWampler2005}. It uses the notion of {\em quasi-projective} sets, which are the intersections of a Zariski-open and a Zariski-closed subset inside some projective space. Let $X_{\rm reg}$ denote the set of smooth points in $X$.


\begin{theorem}\label{thm:Sard-complex}
Let $f(\bx)$ denote a system of $n$ algebraic functions on an irreducible quasiprojective set $X$. Then there is a Zariski openset $U\subset \overline{f(X)}\subset \C^n$ such that for $\by\in U$, $\V(f(\bx)-\by)\cap X_{\rm reg}$ is smooth of dimension equal to the corank of $f$, i.e. $\dim X-\dim \overline{f(X)}$. Moreover, the Jacobian matrix of $f$ is of rank equal to $\dim X-\dim \overline{f(X)}$ at all points of $\V(f(\bx)-\by)\cap X_{\rm reg}$. 
\end{theorem}

\begin{theorem}\label{thm:finite-crit}
Let  $f_1, \ldots, f_s\in \R[\bx]$ and assume that $\V(f_1, \ldots, f_s)\subset \C^n$ is a smooth equidimensional algebraic set of dimension $n-s$.  Let $g_0\in \R[\bx]$.  Then there exists a Zariski closed proper subset ${\mathcal S}$ of $\C^n$ with $\dim({\mathcal S})<n$ such that for all $c= (c_1, \ldots, c_n)\in \R^n\setminus {\mathcal S}$ the polynomial
$$
g:=g_0\cdot\left((x_1-c_1)^2+\cdots + (x_n-c_n)^2 +1\right)\in \R[\bx]
$$
has finitely many critical points for  $\V(f_1, \ldots, f_s)$ where $g$ does not vanish. 
\end{theorem}

\begin{proof}
Let $V$ be an irreducible component of $\V(f_1, \ldots, f_s)$. By our assumptions, $\dim(V)=n-s$ and $V$ is smooth. We will prove that $g$ has finitely many critical points for $\V(f_1, \ldots, f_s)$ that lie in $V\setminus \V(g)$, and since this will be true for all irreducible components of $\V(f_1, \ldots, f_s)$, we get the claim of the theorem. 

We can assume that 
$$\dim (V\cap \V(g))< n-s
$$ otherwise, since $V$ is irreducible, $V\subset \V(g)$ and there is nothing to prove.

To simplify the notation, define for $c=(c_1, \ldots, c_n)\in \R^n$
$$
U_c(\bx):=(x_1-c_1)^2+\cdots + (x_n-c_n)^2 +1.
$$
Then 
$$
\nabla g(\bx)= U_c(\bx)\nabla g_0(\bx) +g_0(\bx)\nabla U_c(\bx)
$$
Thus, a point $\bx\in V$ is a critical point of $g$ for $\V(f_1, \ldots, f_s)$ if and only if 
$$
U_c(\bx)\nabla g_0(\bx) +g_0(\bx)\nabla U_c(\bx)\;\in\; {\rm span}_\C \left(\nabla f_1(\bx), \ldots, \nabla f_s(\bx)\right),
$$ 
This implies that $\bx\in V$ is a critical point of $g$ for $\V(f_1, \ldots, f_s)$ such that $g(\bx)\neq 0$ if and only if there exists $\lambda=(\lambda_1, \ldots, \lambda_s)\in \C^s$ such that 
{\small\begin{eqnarray*}
\left[\begin{array}{c}
c_1\\\vdots\\
c_n
\end{array}\right]
=
\frac{U_c(\bx)}{g_0(x)}\left[\begin{array}{c}
\partial_{x_1}g_0(\bx)\\\vdots\\
\partial_{x_n}g_0(\bx)
\end{array}\right]
+2\left[\begin{array}{c}
x_1\\\vdots\\
x_n
\end{array}\right]-
\lambda_1 \left[\begin{array}{c}
\partial_{x_1}f_1(\bx)\\\vdots\\
\partial_{x_n}f_1(\bx)
\end{array}\right]
\cdots-
\lambda_s \left[\begin{array}{c}
\partial_{x_1}f_s(\bx)\\\vdots\\
\partial_{x_n}f_s(\bx)
\end{array}\right].
\end{eqnarray*}}

Let 
$$W:=\{(\bx, t, \lambda)\in V\times \C^{s+1}\;|\; g(\bx)\neq 0, t\neq 0\}$$ 
and define  $p_i:W\rightarrow\C$ for $i=1, \ldots, n$, 
$$
p_i(\bx, t, \lambda):=t\partial_{x_i}g_0(\bx)+2x_i-\lambda_1\partial_{x_i}f_1(\bx)-\cdots - 
\lambda_s\partial_{x_i}f_s(\bx).
$$

Thus, $\bx\in V\setminus \V(g)$ is a critical point of $g$ for $\V(f_1, \ldots, f_s)$ if and only if there exists $(t,\lambda)\in \C^{s+1}$ such that $(\bx,t,\lambda)$  satisfies  
$$t=\frac{U_c(\bx)}{g_0(\bx)} \;\;\text{ and }\;\;p_i(\bx,t,\lambda)=c_i\quad  i=1, \ldots, n.$$

First we prove  that for  $\bp=(p_1, \ldots, p_n): W\rightarrow \C^n$, $\bp$ is dominant.  For all $\bx^*\in V$ and for $t=\lambda_1=\cdots =\lambda_s=0$ we have
 $$J_Wp(\bx^*,0,0)=[2\cdot I_{n-s}| \nabla g_0(\bx^*)| -Jf(\bx^*)],
 $$
 where $J_Wp$ is the Jacobian  of $\bp$ in a local parametrization of $W$ at $(\bx^*,0,0)$. By our assumtion on $V$,  ${\rm rank} Jf(\bx^*)=s$, thus ${\rm rank} J_Wp(\bx^*,0,0)\geq n$. This implies that the image of $\bp$ is $n$-dimensional, thus $\bp$ is dominant. Since $W$ inherits the irreducibility and  and smoothness of $V$, we get that $\overline{\bp(W)}=\C^n$.
 
We can apply Theorem \ref{thm:Sard-complex}  for $\bp$, so there exists a Zariski closed subset ${\mathcal S}$ of $\C^n$ and  such that for all $c\in \C^n\setminus {\mathcal S}$
for $W_1:=\{(\bx, t, \lambda)\in W\;|\; \bp(\bx, t, \lambda)=c\}$ we have 
$$
\dim(W_1)= \dim(W) - n=1
$$ 
using that $\dim(W)=n-s+s+1=n+1$ by our assumption that  $\dim (V\cap \V(g))<n-s$.\\

 Fix $c\in \R^n\setminus {\mathcal S}$. Next we show that  that 
 $$\dim\left\{(\bx, t, \lambda)\in W_1\;:\;U_c(\bx)-tg_0(x)=0
 \right\}=0.
 $$
If the above dimension is not $0$ then $0$ is a critical value of the function  $q(\bx,t):=U_c(\bx)-tg_0(x)\;:\; W_1\rightarrow \C$.  If we have such a critical value, then there exists $(\bx^*,t^*, \lambda^*)\in W_1$ such that $\nabla q(\bx^*,t^*)=0$, i.e.
$$
\left[
\partial_{x_1}q(\bx^*,t^*),\ldots\\
\partial_{x_n}q(\bx^*,t^*), 
g_0(\bx^*)
\right]=
\left[
0,\ldots,0,
0
\right].
$$
Thus we must have $g_0(\bx^*)=0$. However, $W_1\subset W$, so for $(\bx^*,t^*, \lambda^*)\in W_1$ we have $g_0(\bx^*)\neq 0$,  a contradiction. 

This implies that for any $c\in \R^n\setminus {\mathcal S}$, the solution set of $p_i(\bx,t,\lambda)=c_i$ for $i=1, \ldots, n$ and the equation $t=\frac{U_c(\bx)}{g_0(\bx)}$ is a zero dimensional subset $Z\subset W$. The set $\{\bx : (\bx,t,\lambda)\in Z\}$ is the finite set of critical points of $g$ for $\V(f_1, \ldots, f_s)$ in $V\setminus \V(g)$. 

\end{proof}

\section{Computation of Real Smooth Points - General Case} \label{sec:smooth2}

After introducing all necessary theory and subroutines for our purposes, now we are ready to return to our main topic, computing smooth points on general real algebraic varieties. 

We first define two genericity assumptions, informed by our previous results, in particular Proposition \ref{prop:NN} and Theorem \ref{thm:finite-crit}. Recall that $\crit(V,\pi_i)$ is the polar variety of the algebraic set $V$ with respect to the projection $\pi_i$ as in Definition \ref{def:polar}.

\begin{definition}\label{def:BB}
Consider polynomial $F\in \R[x]$  with $\V(F) \cap \R^n$ bounded, matrix $A \in {\rm GL_n(\R)}$. Define $V^A = \V(F^A)$ and $V_e^A:=\V(F^A-e) \subset \C^n$ for some constant $e >0$. We say that $F$ and $A$ satisfy Assumption {\tt (B)} if:
\begin{enumerate}
\item[{\tt (1)}:] there exists $e_0 > 0$ such that for all $0<e\leq e_0$ and all $1 \leq i \leq n$, $\crit(V_e^A, \pi_i)$ is either empty or is smooth and equidimensional with complex dimension $i-1$;

\item[{\tt (2)}:] for all $\bp \in V^A \cap \R^n$, $\pi_d^{-1}(\pi_d(\bp)) \cap (V^A \cap \R^n)$ is finite, where $d$ is greater than or equal to the local real dimension of $V^A$ at $\bp$;

\end{enumerate}
\end{definition}

\begin{definition}\label{def:CC}
Consider polynomials $F, g \in \R[\bx]$  and constant $c=(c_1, \hdots, c_n) \in \R^n$. Define $V_e:=\V(F-e) \subset \C^n$ for some constant $e >0$. We say that $F,g$ and $c$ satisfy Assumption {\tt (C)} if:
\begin{enumerate}
\item[{\tt (C)}:] There exists $e_0 > 0$ such that for all $0<e\leq e_0$, all $1 \leq i \leq n$, the polynomial
$$
\overline{g}:=g\cdot\left((x_1-c_1)^2+\cdots + (x_n-c_n)^2 +1\right)\in \R[\bx]
$$
has finitely many critical points for the polar variety $\crit(V_e, \pi_i)$  where $g$ does not vanish. 

\end{enumerate}
\end{definition}

The following theorem and corresponding proof establish the correctness of the main algorithm of the paper, {\sc Real Smooth Point Algorithm \ref{alg:RSP2}}.

\begin{theorem}
\label{thm:perturbed}
Fix $n, i, f_1, \hdots, f_s$ as in the input of {\sc Algorithm \ref{alg:RSP2}} such that $V(f_1, \ldots, f_s)\cap \R^n$ is bounded. Assume $A \in GL_n(\R)$ such that $A$ and  $F = f_1^2 + \cdots + f_s^2$ satisfy Assumption {\tt (B)} as in Definition \ref{def:BB}.  Also, for each $j=1, \ldots, r$, in Step (4) of {\sc Algorithm \ref{alg:RSP2}} we assume that $F^A, g_j$ and $c$ satisfy Assumption {\tt (C)} as in Definition \ref{def:CC}.  Then {\sc Algorithm \ref{alg:RSP2}} is correct. Furthermore, let $Z$ the output of {\sc Algorithm \ref{alg:RSP2}}.  If $Z=\emptyset$, then $\V(f_1, 
\hdots, f_s) \cap \R^n$ has no connected components of dimension $i-1$. If $Z\neq \emptyset$, then $\V(f_1, \hdots, f_s)\cap \R^n $ has some connected components
 of dimension at least $ i-1$.
\end{theorem}

\begin{proof} Fix $1\leq i \leq n$. Using the notation $V^A_e:=V(F^A-e)$,   by Assumption {\tt (B)}, $\crit(V^A_e,\pi_i)$ is smooth and equidimensional of dimension $i-1$ for all sufficiently small $e>0$. Only locally in this proof we use the simplified notation 
$$V:=\lim_{e\rightarrow 0}\crit(V^A_e,\pi_i)\subset \C^n$$
without designating its dependence on $i$ or $A$, otherwise the indices would become too involved. By Proposition \ref{prop:connectedlimits} (over $\C$ instead of $\R$),
 the set $V$
  is a Zariski closed set 
that is either equidimensional of dimension~$i-1$
or empty. Assume that $\{(g_j, (G_j, L, W_j)): j=1,...,r\}$ satisfies output specifications (i)-(vi) of  Algorithm \ref{Alg:Cg}. 
Fix $j\in \{1, \ldots, r\}$ and let $V_j\subset V$ be the union of irreducible components of $V$ with witness set $(G_j,L,W_j)$.
First we establish that $U_j$ defined in Step (4a) is finite.
We note that  $\crit(V^A_e,\pi_i)$ is smooth and equidimensional for all sufficiently small $e>0$. If $|U_j|=\infty$ then we redefine $g_j$ with a generic $c\in \R^n$. Using Assumption {\tt (C)}, we get that the redefined $U_j$ is finite and the loop will terminate. 

Next, since $\dim (V_j\cap \V(g_j))<i-1$  by (v) in Algorithm \ref{Alg:Cg}, 
either $(V_j\setminus \V(g_j))\cap \R^n=\emptyset$ or for each bounded connected component $C$ of $V_j\cap \R^n$  where $g_j$ is not identically zero, there exists $\bz\in U_j\cap C$ such that $g_j(\bz)\neq 0$. 
  Suppose $(V_j\setminus \V(g_j))\cap \R^n\neq \emptyset$.  Let $C_1, \ldots,  C_t\subset V_j\cap \R^n$ be  the bounded connected components of $V_j\cap \R^n$ where $g_j$ is not identically zero. Fix $m\in \{1, \ldots, t\}$.  Since  each $C_m$ is compact, the  distance from $C_m$ to $C_k$ is positive for each $m\neq k$. 
Also, for all sufficiently small $e$,  $V^A_e\cap \R^n$ is also compact. 
Since $C_m\subset V\cap \R^n$ is compact, Proposition \ref{prop:connectedlimits} shows that there exist connected components  $C_{m,1}^{(e)}, \ldots,C_{m,s_m}^{(e)}$ of  $V^A_e\cap \R^n$ for all sufficiently small $e>0$  such that $C_m=\bigcup_{l=1}^{s_m}\lim_{e\rightarrow 0^+}C_{m,l}^{(e)}$, each~$C_{m,l}^{(e)}$ is bounded, and since $C_m$ and $C_j$ has positive distance for $m\neq j$, also by  Proposition \ref{prop:connectedlimits} we have that
$$\cup_{l=1}^{s_m}C_{m,l}^{(e)} \cap \cup_{l=1}^{s_j}C_{j,l}^{(e)} = \emptyset$$ for all $j\not = m$.  For each $l=1, \ldots, s_m$, let ${\mathcal S}_{m,l}^{(e)}:=\pi_x (\V(L^{(j)})) \cap C_{m,l}^{(e)}$. By Lemma \ref{lem:noperturb}, ${\mathcal S}_{m,l}^{(e)}\neq \emptyset$ and it contains all points in $C_{m,l}^{(e)}$ where $g_j$ takes its extreme values. Let ${\mathcal S}_m:=\bigcup_{l=1}^{s_m}\lim_{e\rightarrow 0} {\mathcal S}_{m,l}^{(e)}$. Since~${\mathcal S}_{m,l}^{(e)}$ is bounded for all sufficiently small $e$, none of the limit points escape to infinity.   
Suppose that for all $\bz\in {\mathcal S}_m$ we have $g_j(\bz)=0$. Since $C_m$ is compact, by the Extreme Value Theorem, $g_j$ attains both a minimum and a maximum on $C_m$.
Since $g_j$ is not identically zero on $C_m$, 
either the minimum or the maximum value of $g_j$ on $C_m$ must be nonzero.  Let 
$\bz^*\in C_m$ such that $|g_j(\bz^*)|>0$. Let $\bz_e^*\in C_{m,l}^{(e)}$ for some $l=1, \ldots, s_m$ such that $\lim_{e\rightarrow 0} \bz_e^*=\bz^* $. Then for any  $\bz\in {\mathcal S}_m$, if $\bz_e \in {\mathcal S}_m^{(e)}$ such that $\lim_{e\rightarrow 0} \bz_e=\bz$, then for sufficiently small $e$  we have that $|g_j(\bz_e^*)|> |g_j(\bz_e)|$ by $\lim_{e\rightarrow 0} g_j(\bz_e)=g_j(\bz)=0$. Since ${\mathcal S}_m$ is finite, we can choose a common $e_0$ value for all $\bz\in {\mathcal S}_m$ so that if $0<e<e_0$ then $|g_j(\bz_e^*)|> |g_j(\bz_e)|$ for all $\bz_e\in {\mathcal S}_m^{(e)}$. Thus, ${\mathcal S}_m^{(e)}$ could not contain all points of $C_{m, l}^{(e)}$ for $l=1, \ldots, s_i$ where $g_j$ takes its extreme values, a contradiction. This proves $\pi_x\left(\lim_{e\rightarrow 0}\V(L^{(j)}_e)\right) \cap C_m$  contains a point  $\bz\in C_m$ such that $g_j(\bz)\neq 0$, i.e. $T_j\cap C_m\neq \emptyset$.

Next, let $Z_j=T_j\cap V_j$ and $Z=\bigcup_{j=1}^r Z_j$ as in Steps (4) and (5). Since $V=\bigcup_{j=1}^r V_j$ and for each $j=1, \ldots, r$,  $\Sing(V_j)\subset \V(g_j)$,  $V_k\cap V_j\subset V(g_j)$ for all $k\neq j$
 by (iii)-(vi) in {\sc Algorithm \ref{Alg:Cg}}, these points are smooth in $V_j\cap \R^n$, and also smooth in $V \cap \R^n$. Thus if $Z\neq \emptyset$, by  Theorem~\ref{thm:realconvdim} and Proposition \ref{prop:limpolar},  $\V(f_1, \hdots, f_s) \cap \R^n$ must have dimension $\geq i-1$ connected components. Conversely, if $\V(f_1, \hdots, f_s)\cap \R^n=V(F)\cap \R^n$ has a bounded connected component of dimension $i-1$, then by Proposition \ref{prop:limpolar} we have $V(F)\cap \R^n= V$, so there exists $j\in\{1, \ldots, r\}$ such that  $V_j\cap \R^n$ has a bounded connected component of dimension $i-1$. By  Theorem \ref{thm:realconvdim}, this component has  real smooth points. In fact, these real smooth points form a semi-algebraic set that has also dimension $i-1$. 
However, since  $\dim \left(V_j\cap \V(g_j)\right)<i-1$, $g_j$ does not vanish on all real smooth points of this component, but it vanishes on the singular points. By the above argument $T_j\cap V_j $ must contain points where~$g_j$ is not zero, thus $Z_j$ and $Z$ are not empty.
\end{proof}

\begin{example}\label{ex:NotequiDim}
Consider $f_1,f_2\in\R[x_1,x_2,x_3]$ where
$$f_1 = (x^2+1)(x^2+y^2+z^2-1) \hbox{~~and~~}
f_2 = (x^2+1)(x+y+z-1).$$
Clearly, $V(f_1,f_2)$ is not equidimensional,
but $V(f_1,f_2)\cap\R^3$ is compact of dimension $1$.  
With $\ab=(1,1)$,   the limit variety $V$ is a curve
with two irreducible components: $V_1 = V(x^2+y^2+z^2-1,x+y+z-1)$
and $V_2 = V(x^2+1,x^2+y^2+z^2-x-y-z)$. We utilize 
{\sc Algorithm~\ref{alg:RSP2}} to compute a smooth point
on this real curve.
Using $g_1 = x-y$ and $g_2 = x(2y-1)$, 
respectively, one obtains 
$S_1 = \{(1\pm\sqrt{3},1\mp\sqrt{3},1)/3\}$ consisting
of two smooth points on $V_1\cap\R^3$ and $S_2 = \emptyset$.
\end{example}

Using Proposition \ref{prop:boundg}, we can bound the number of homotopy paths followed in Step (3) in the {\sc Real Smooth Point Algorithm~\ref{alg:RSP2}}, which is the bottleneck of our method. Note that the number of iterations $r$ is at most $\deg(V)\leq D^n$ where $D:=\max_{i=1}^s \{\deg(f_i)\}$. Thus the  {\sc Membership Test Algorithm \ref{Alg:MT}} utilized 
in Step (4) of the {\sc Real Smooth Point Algorithm~\ref{alg:RSP2}} follows at most $|W_j|=\deg(V_j)\leq \deg(V)\leq D^n$  homotopy paths.

\begin{algorithm}[H]
\caption[Alg:RSP2]{\sc RealSmoothPoint }
\label{alg:RSP2}
\begin{description}
\item[Input:] $f=\left(  f_{1},\ldots,f_{s}\right)  \subset\mathbb{R}\left[
x_{1},\ldots,x_{n}\right]  ,$ $i\in\left\{  1,\ldots,n\right\}  ,$ $n\geq2$.

\item[Output:] $Z\subset\mathbb{R}^{n}$, a finite set containing smooth points in
each $\left(  i-1\right)  $-dimensional bounded connected component of $\V_{\mathbb{R}%
}\left(  f\right)  $
\end{description}

\begin{enumerate}
\item Define $F:= f_{1}^{2}+\cdots+f_{s}^{2}$ and $e_1 :=(1,0,\hdots, 0)$.

\item Choose generic $A\in GL_{n}\left(  \mathbb{R}\right)  $.

\item $\left\{  \left(  g_{1},D_{1}\right)  ,\ldots,\left(  g_{r}%
,D_{r}\right)  \right\}$ := \textsf{ComputationOfG}$\left(
F^{A},\frac{\partial F^{A}}{\partial x_{i+1}},\ldots,\frac{\partial F^{A}}{\partial x_{n}},\textbf{e}_{1}\right)$. 
{\small {\tt // See Algorithm \ref{Alg:Cg}}}

{\small {\tt // $D_j$ is a deflated witness set for some $V_j$ a union of irreducible components of $\lim_{e \rightarrow 0^+}\crit(V_e^A, \pi_i)$ where $\crit(V_e^A,\pi_i):=\V(F^A-e, \frac{\partial F}{\partial x_{i+1}}, \hdots, \frac{\partial F}{\partial x_n})$ for $e$ a parameter.}}

\item For $j=1,\ldots,r$

\begin{enumerate}
\item Loop

$\qquad L^{\left(  j\right)  }:= \left\{F^{A},\frac{\partial F^A}{\partial
x_{i+1}},\ldots,\frac{\partial F^A}{\partial x_{n}}\right\} \cup \left\{  \frac{\partial g_{j}%
}{\partial x_{k}}+\lambda_0\frac{\partial F^{A}}{\partial x_{k}}+\sum\limits_{t=i+1}%
^{n}\lambda_{t}\frac{\partial^{2}F^A}{\partial x_{t}\partial x_{k}}%
:k=1,\ldots,n\right\}   $. \\

\qquad {\small {\tt // $L^{\left(  j\right)}$ is the Lagrange multiplier system in variables}} 

\qquad {\small {\tt $x_1, \ldots, x_n, \lambda_0,\lambda_1, \ldots, \lambda_s$.}}

\qquad$({\rm flag},U_{j})$ :=\textsf{WitnessPointsInLimits}$\left(  L^{\left(
j\right)  },\textbf{e}_{1}, \emptyset\right)  .$
\qquad {\small {\tt // See Algorithm \ref{Alg:WPL}}}

\qquad If ${\rm flag}=$TRUE, exit loop. 

\qquad Choose generic $c\in\mathbb{R}^{n}.$

\qquad$\overline{g_{j}}:= g_{j}\cdot\left(  \left(  x_{1}-c_{1}\right)
^{2}+\cdots\left(  x_{n}-c_{n}\right)  ^{2}+1\right)  .$

\qquad Restart loop with $g_j:=\overline{g_j}$.

\item Compute $T_{j}:= \pi_x(U_{j})\setminus V(g_j) \cap \R^n  .$ \qquad {\small {\tt // $\pi_x$ projection onto $x$ coordinates}}

\item Set $Z_{j}:=\emptyset.$

\item For each $\bp\in T_{j}$

\qquad If \textsf{MembershipTest}$\left(  \bp,D_{j}\right)  =$ TRUE, then 
\qquad {\small {\tt // See Algorithm \ref{Alg:MT}}}

$\qquad\qquad Z_{j}:= Z_{j}\cup\left\{ \bp\right\}  .$
\end{enumerate}

\item Return $Z:= \bigcup_{j=1}^{r}Z_{j}.$
\end{enumerate}
\end{algorithm}

\begin{corollary} \label{cor:boundLe} Fix $i\in \{1, \ldots, n\}$. Let $f_1, \ldots, f_s\in \R[\bx], A \in GL_n(\R)$ such that $A$ and $F = f_1^2 + \cdots + f_s^2$ satisfy Assumption {\tt (B)}. Assume that for some fixed $j\in \{1, \ldots, r\}$, the zero-dimensional polynomial system $L^{(j)}$  as in {\sc  Algorithm \ref{alg:RSP2}} Step (4a) is zero-dimensional. Then, the number of complex roots of~$L^{(j)}$ is bounded by 
$\deg(g_j)^n(2D)^{n-i+1}\leq (n-i+1)^{(k_j+1)n}(2D)^{2n-i+1},$
where $D:=\max_{i=1}^s \{\deg(f_i)\}$,  assuming that $\deg(g_j)\geq 2D$,  and $k_j$ is the number of iterations of the isosingular deflation needed to compute $G_j$ using {\sc Algorithm \ref{Alg:DWS}}.
\end{corollary}

\begin{proof}
In Step 3 we input $n-i+1$ polynomials of degrees at most $2D$ to Algorithm \ref{Alg:Cg}. By Proposition \ref{prop:boundg} with $s=n-i+1$, we get that $\deg(g_j)\leq (n-i+1)^{k_j+1}(2D)$ for all $j=1, \ldots, r$. Then the defining equations of $L^{(j)}$ include $n-i+1$ polynomials of degree at most $2D$ and $n$ polynomials of degree at most $\deg(g_j)$ (assuming that $\deg(g_j)\geq 2D$). The Bezout bound for the number of common roots of $L^{(j)}$ gives the claim of the Corollary.
\end{proof}

\section{Numerical Real Dimension Algorithm}

Our  real dimension algorithm is as follows.

\begin{algorithm}[ht]
\caption[Alg:NRD]{\sc NumericalRealDimension }
\label{Alg:NRD}
\begin{description}
\item[Input:] $f_1, \ldots, f_{s}\in \R[x_1, \hdots, x_n]$ not all zero, such that $\V(f_1, \ldots, f_s)\cap\R^n$ is bounded and $n\geq 2$.
\item[Output:] The real dimension of $\V(f_1, \ldots, f_s)\cap \R^n$.
\end{description}
 \begin{enumerate}
\item Let $i:=n$. 
\item Loop
\begin{enumerate}
\item $S:=\textsf{RealSmoothPoint}(f_1, \hdots, f_s, i)$. 
\qquad {\small {\tt // See Algorithm \ref{alg:RSP2}}}

{\small {\tt // $S\subset \R^n$ contains smooth points in  $\V(f_1, \hdots, f_s) \cap \R^n$. }}

\item If $S\neq \emptyset$, exit loop and return $i-1$.

\item \label{step:i-1} Increment $i:=i-1$. 
\item If $i=0$, exit loop and return $-1$.  
 \end{enumerate} 
 \end{enumerate}
\end{algorithm}

 \begin{theorem}\label{thm:NRDcorrect}
 Let $n\geq 2$, $f_1, \ldots, f_{s}\in \R[\bx]$ such that $\V(f_1, \ldots, f_s)\cap\R^n$ is bounded. Assume that the conditions of Theorem \ref{thm:perturbed} are satisfied for $1 \leq i \leq n$. Then {\sc Algorithm \ref{Alg:NRD}} is correct. 
 \end{theorem}

\begin{proof}
By assumption, Theorem \ref{thm:perturbed} gives the correctness of {\sc Real Smooth Point Algorithm \ref{alg:RSP2}} in Step (2a). We prove by induction on $n-i< n$ that we have the following loop invariant in Step (2):  $\dim (\V(f_1, \ldots, f_s)\cap \R^n)\leq i-1$. This is true when $n-i=0$. Assume it is true for $0\leq n-i<n$,  i.e. $\dim (\V(f_1, \ldots, f_s)\cap \R^n)\leq i-1$.   In Step (2b) if $S\neq \emptyset$ then by Theorem~\ref{thm:perturbed}, $V(f_1, \dots, f_s )\cap \R^n$ has some connected components of dimension at least $i-1$. By the inductive hypothesis, we get that this dimension must be equal to $i-1$, so that is the real dimension of $V(f_1, \dots, f_s )$ that we return and exit the loop. Otherwise, we have $S=\emptyset$.   We claim that in this case if $i>1$ then $\dim (\V(f_1, \ldots, f_s)\cap \R^n)< i-1$.  Since  $S=\emptyset$,  Theorem \ref{thm:perturbed} implies that there are no connected components of dimension $i-1$ in $V(f_1, \ldots, f_s)\cap\R^n$. Again, by the inductive hypothesis, we get that $\dim (\V(f_1, \ldots, f_s)\cap \R^n)< i-1$, maintaining the loop invariant. In particular,  if $i=1$ this implies that  $\V(f_1, \ldots, f_s)\cap \R^n=\emptyset$, i.e. the dimension is $-1$ by convension.  
\end{proof}

\begin{example}
The Whitney umbrella is a real algebraic set consisting  of a 2-dimensional umbrella-like surface with a 1-dimensional handle 
along the z-axis and defined by $f_1 = x^2-y^2z$. Since the surface is not compact, we add $f_2 = x^2+y^2+z^2+w^2-4$ following Proposition ~\ref{prop:bounded}. 
$g = x$ satisfies the requirements of Theorem~\ref{thm:noperturb}
and results in the red smooth points 
shown in Figure~\ref{fig:Whitney}(a),
confirming the real dimension is two.

To instead determine the local real dimension of the handle of the umbrella, we localize our computations by taking $f_2=x^2+y^2+(z+1)^2+w^2-\frac{1}{4}$. As expected, an optimization using $g = z+1$ results in no smooth real points. We return to Step 2 and add the equation defining our next polar variety, namely $f_3=2y^3-4yz^2-4yz$. 
Optimizing the new system with respect to $g=z+1$ 
yields the smooth real point on the handle
shown in Figure~\ref{fig:Whitney}(b), confirming the real dimension of the handle is one.
\end{example}

 \begin{figure}[!h]
    \centering
   \[ \begin{array}{cc}
    \includegraphics[scale=0.275]{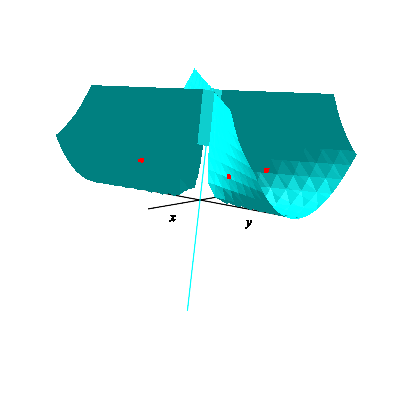} & \includegraphics[scale=0.275]{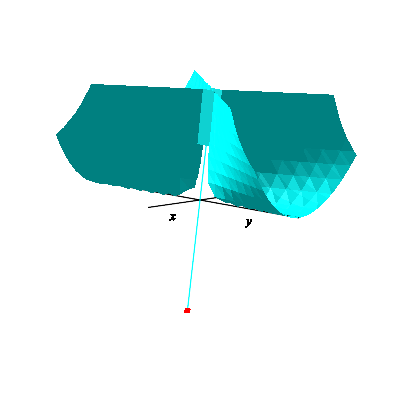} \\ 
    \text{(a) Dim. 2 Smooth Points}& \text{(b) Dim. 1 Smooth Point}
    \end{array}\]
    \caption{Whitney umbrella}
    \label{fig:Whitney}
\end{figure}

\section{Implementation on a Benchmark Family of Problems}

A benchmark family that appears in the papers  \cite{BanSaf2015} and \cite{LairezSafey2021} are hypersurfaces
$\V(f_n) \subset\C^n$ for $n\geq 3$ such that
\begin{equation}\label{eq:Benchmark}
\mbox{$f_n(\bx) = \left(\sum_{j=1}^n x_j^2\right)^2 - 4\sum_{j=1}^n \left(x_jx_{j+1}\right)^2$}
\end{equation}
where $x_{n+1}=x_1$.  Since $f_n$ is homogeneous, one knows $\dim \V(f_n)\cap\R^n = \dim (\V(f_n,s_n)\cap\R^n) + 1$ where~
\mbox{$s_n = \sum_{j=1}^n x_j^2-1$} 
in which $\V(f_n,s_n)\cap\R^n$ is compact.
The cases $3\leq n\leq 6$ were solved in \cite{BanSaf2015}, which were improved in \cite{LairezSafey2021} where they consider the cases $3\leq n\leq 8$.
Here we consider the cases  $3\leq n\leq 9$.
All code used in these computations is available at
\url{dx.doi.org/10.7274/r0-5c1t-jw53} with the timings
reported using {\tt Bertini} (\cite{Bertini})
on an AMD Opteron 6378 2.4 GHz processor
using one (serial) or 64 (parallel) cores.

For $n=3$ with $g = \partial f_3/\partial x_1$, one obtains
smooth points on $\V(f_3)\cap\R^3$ 
thereby showing $\dim \V(f_3)\cap\R^3 = 2$
in about a second in serial.

For $n=4$, $\V(f_4)$ has multiplicity $2$ with respect to $f_4$
since
$$f_4(x_1,x_2,x_3,x_4) = \left(x_1^2-x_2^2+x_3^2-x_4^2\right)^2.$$
Trivially, a deflated witness system for $\V(f_4)$ is $G = x_1^2-x_2^2+x_3^2-x_4^2$.  
For $g = x_1x_2$, one obtains smooth points on $\V(f_4)\cap\R^4$ showing
$\dim \V(f_4)\cap\R^4 = 3$ in about a second in serial.

For $n=5,\dots,9$, with $g = \partial f_n/\partial x_1$,
one does not obtain smooth points on $\V(f_n)\cap\R^n$
showing $\dim \V(f_n)\cap\R^n < n-1$.  Therefore, one can
move down the dimensions searching for real smooth points
using perturbed polar varieties, similarly to Step (2) of Algorithm~\ref{Alg:NRD}. Nonsingular real points are first found at dimension $2$, 
i.e., $\dim \V(f_n)\cap\R^n = 2$.
In fact, at dimension $2$, the polar variety contains 
various irreducible components of degree $2$ and testing 
one is enough to confirm the existence of a smooth real point.  
Table~\ref{tab:Benchmark2} lists the total computation
time using parallel~processing.
\begin{table}[!ht]
    \centering
    \caption{Summary of benchmark problem \eqref{eq:Benchmark} for $5\leq n\leq 9$}
    \begin{tabular}{c|c|r}
      $n$ & $\dim \V(f_n)\cap\R^n$ & Time (min)\\
      \hline
      $5$ & $2$ & $3.63$\hspace{.1in}~\\
      $6$ & $2$ & $5.73$\hspace{.1in}~\\
      $7$ & $2$ & $34.81$\hspace{.1in}~\\
      $8$ & $2$ & $159.81$\hspace{.1in}~\\
      $9$ & $2$ & $2675.25$\hspace{.1in}~\\
      \end{tabular}
    \label{tab:Benchmark2}
\end{table}

\section*{Acknowledgments}
This paper is dedicated to our friend and coauthor
Agnes Szanto who passed away during revisions.
The authors thank Mohab Safey El Din, Elias Tsigaridas, and Hoon Hong for many discussions
regarding real algebraic geometry
and the anonymous reviewers for their excellent comments and suggestions.
This research was partly supported by NSF  grants 
CCF-1812746~(Hauenstein) and CCF-1813340 (Szanto and Harris).

\bibliographystyle{elsarticle-harv}
\bibliography{References}

\end{document}